\newcommand{\N}{{\mathbb N}}
\newcommand{\A}{\mathbb A}
\newcommand{\Z}{{\mathbb Z}}
\newcommand{\F}{\mathbb F}
\newcommand{\D}{{\mathcal D}}
\newcommand{\bj}{{\textbf j}}
\newcommand{\C}{{\mathcal C}}
\newcommand{\Zc}{{\mathcal Z}}
\newcommand{\I}{{\mathcal I}}
\newcommand{\U}{{\mathcal U}}
\newcommand{\tq}{\; \mid \;}
\newcommand{\doble}[2]{\genfrac{}{}{0cm}{2}{#1}{#2}}
\newcommand{\topeK}[1]{\left\lfloor\frac{#1}{K}\right\rfloor}
\newcommand{\Db}{\overline{\D^*}}
\newcommand{\Or}{\mathcal{O}}
\newtheorem{theorem}{Theorem}
\newtheorem{lemma}[theorem]{Lemma}
\newtheorem{proposition}[theorem]{Proposition}
\newtheorem{definition}[theorem]{Definition}
\newtheorem{remark}[theorem]{Remark}
\newtheorem{example}[theorem]{Example}
\newtheorem{examples}[theorem]{Examples}
\begin{document}

\title{Information sets from defining sets for Reed-Muller codes of first and second order  
\thanks{This work was partially supported by MINECO, project MTM2016-77445-P, and Fundaci\'{o}n S\'{e}neca of Murcia, project 19880/GERM/15.
An extended abstract of a part of this paper was presented at Fith International Castle Meeting on Coding Theory and Applications, Estonia, 2017}}



%
\author{\IEEEauthorblockN{Jos\'e Joaqu\'{i}n Bernal
 and
Juan Jacobo Sim\'on (Member, IEEE). \\
\IEEEauthorblockA{Departamento de Matem\'{a}ticas\\
Universidad de Murcia,
30100 Murcia, Spain.\\ Email: \{josejoaquin.bernal, jsimon\}@um.es} \\
}}

\maketitle

\begin{abstract}
 Reed-Muller codes belong to the family of affine-invariant codes. As such codes they have a defining set that determines them uniquely, and they are extensions of cyclic group codes. In this paper we identify those cyclic codes with multidimensional abelian codes and we use the techniques introduced in \cite{BS} to construct information sets for them from their defining set. For first and second order Reed-Muller codes, we describe a direct method to construct information sets in terms of their basic parameters.
\end{abstract}

%
\IEEEpeerreviewmaketitle

\section{Introduction}

The family of Reed-Muller codes was introduced by D. E. Muller in 1954 \cite{Muller} and a specific decoding algorithm for them was presented by I. S. Reed in the same year \cite{Reed}. Since then, many authors have paid attention to this family of codes for many reasons. On the one hand, they can be implemented and decoded easily, and on the other hand they have a rich algebraic structure which allows us to see them from different points of view. Originally, Reed-Muller codes were defined in terms of boolean functions by Muller and, equivalently, from this point of view they can be constructed as polynomial codes. In another context, they can be treated as geometric codes; specifically, they can be identified with codes of the designs of points and flats in the affine space over the binary field. Finally, they also possess an algebraic structure that let us treat them as group algebra codes. We are interested in this last point of view. Specifically, any Reed-Muller code can be identified with an ideal in a group algebra of an elementary abelian group. Moreover, the family of Reed-Muller codes is contained in the family of so-called affine-invariant codes and so a defining set can be defined for them.  The reader may see \cite{Hb2} for a comprehensive explanation on the structures of Reed-Muller codes. 

In particular, we are interested in the problem of finding information sets for Reed-Muller codes, a question that  has been addressed for many authors earlier. 
From the geometric point of view several ideas for finding information sets has been presented. In \cite{BlokMoor} and \cite{Moor} Moorhouse and Blokhuis gave bases formed by the incidence vectors of certain lines valid for a more general family of geometric codes. Later, J. D. Key, T. P. McDonough and V. C. Mavron extended these definitions in \cite{KMM} in order to apply the permutation decoding algorithm. Finally, in \cite{KMM2}, the same authors gave a simple description of information sets for Reed-Muller codes by using the polynomial approach. 

In this work we use the fact that Reed-Muller codes can be seen as extended-cyclic affine-invariant codes to get information sets. From this context any Reed-Muller code is a parity check extension of a cyclic group code, so any information set of that cyclic code is obviously an information set for the Reed-Muller code; moreover, there exists a direct connection between the respective defining sets. On the other hand, in \cite{BS} we introduced a method for constructing information sets for any abelian code starting from its defining set. Then, the goal of this paper is to obtain information sets for Reed-Muller codes of first and second order respectively 
by applying those techniques shown in \cite{BS} to the punctured cyclic group code seen as a multidimensional abelian code.


The paper is structured as follows. Section \ref{Preliminaries} includes the basic notation and the necessary preliminaries about abelian codes. In Section \ref{InfoSets} we recall how the method given in \cite{BS} works in the particular case of two-dimensional abelian codes. In Section \ref{cyclics} we prove some necessary results related to the case of cyclic codes seen as two-dimensional cyclic codes; these results will be essential in the rest of the paper. In Section \ref{RM}  we focus on Reed-Muller codes: we recall their definition as affine-invariant codes in an abelian group algebra and the description of the defining sets. Then we show how to apply the results in Section \ref{cyclics} to Reed-Muller codes under some restrictions on their parameters. Section \ref{first-order} and Secion \ref{second-order} are devoted to develop in detail the construction of information sets for first-order and second-order Reed-Muller codes respectively. In the case of first-order Reed-Muller codes we get a description of the information sets directly from the previous sections; for second-order Reed-Muller codes, although the development is more complicated, involving several technical results, it is remarkable that again the description turns out to be particularly simple in the end. 


\section{Preliminaries}\label{Preliminaries}

In this paper we deal with Reed-Muller codes identified as abelian codes, so for the convenience of the reader 
we give an introduction to abelian codes just in the binary case. Then all throughout $\F$ denotes the field with two elements.

A binary abelian code is an ideal of a group algebra $\F G$, where $G$ is an abelian group. It is well-known that there exist integers $r_1,\dots,r_l$ such that $G$ is isomorphic to the direct product $C_{r_1}\times\cdots\times C_{r_l}$, with $C_{r_i}$ the cyclic group of order $r_i$, $i=1,\dots,l$. Moreover, this decomposition yields an isomorphism of $\F$-algebras from $\F G$ to 
  $$\F[X_1,\dots,X_l]/\left\langle X_1^{r_1}-1,\dots,X_l^{r_l}-1\right\rangle.$$ 
 We denote this quotient algebra by $\A(r_1,\dots,r_l)$ and we identify the codewords with polynomials $P(X_1,\dots,X_l)$ such that every monomial satisfies that the degree of the indeterminate $X_i$ is in $\Z_{r_i}$, the ring of integers modulo $r_i$, and that we always write as canonical representatives (that is, non negative integers less than $r_i$). We  write the elements $P\in \A(r_1,\dots,r_l)$ as $P=P(X_1,\dots,X_l)=\sum a_\bj X^\bj$, where $\bj=(j_1,\dots, j_l)\in \Z_{r_1}\times\dots\times\Z_{r_l}$, $X^\bj=X_1^{j_1}\cdots X_l^{j_l}$ and $a_\bj\in\F$. We always assume that $r_i$ is odd for every $i=1,\dots,l$, that is, we assume that $\A(r_1,\dots,r_l)$ is a semisimple algebra.

Our main tool to study the construction of information sets for abelian codes is the notion of defining set.

\begin{definition}\label{definingset}
Let $\C\subseteq \A(r_1,\dots,r_l)$ be an abelian code. Let $R_i$ be the set of $r_i$-th roots of unity, $i=1,\dots,l$. Then the \textit{root} set of $\C$ is given by
	$$\Zc(\C)=\left\{(\beta_1,\dots,\beta_l)\in\prod_{i=1}^l R_i \tq  P(\beta_1,\dots,\beta_l)=0 \text{ for all } P(X_1,\dots,X_l)\in \C\right\}.$$

 Then, for a fixed primitive $r_i$-th root of unity $\alpha_i$ in some extension of $\F$,  $i=1,\dots,l$, the \textit{defining} set of $\C$ with respect to $\alpha=\{\alpha_1,\dots,\alpha_l\}$ is
$$ D_\alpha\left(\C\right)=\left\{ (a_1,\dots,a_l)\in  \Z_{r_1}\times\dots\times\Z_{r_l} \tq  (\alpha_1^{a_1},\dots,\alpha_l^{a_l}) \in \Zc(\C) \right\}.$$
\end{definition}
 It can be proved that, fixed a collection of primitive roots of unity, every abelian code is totally determined by its defining set.


\begin{remark}
 The reader may check that in the case $l=1$ the previous definitions coincide with the classical notions of zeros and defining set of a cyclic code.
\end{remark}

In order to describe the structure of the defining set of an abelian code we need to introduce the following definitions.

\begin{definition}
 Let $a,r$ and $\gamma$ be integers. The $2^{\gamma}$-cyclotomic coset of $a$ modulo $r$ is the set 
\[ C_{2^\gamma,r}(a)=\left\{a\cdot 2^{\gamma\cdot i} \tq i \in \N\right\} \subseteq \Z_r.\]
\end{definition}

We shall write $C_r(a)$ when $\gamma=1$.

\begin{definition}\label{qorbita}
Given $(a_1,\dots,a_l)\in \Z_{r_1}\times\dots\times\Z_{r_l}$, its \textit{$2$-orbit} modulo  $\left(r_1,\ldots,r_l \right)$ is the set
$$ Q(a_1,\dots,a_l)=\left\{\left(a_1\cdot 2^{i} ,\dots, a_l\cdot 2^{i}  \right)\tq i\in \N\right\} \subseteq \Z_{r_1}\times\dots\times\Z_{r_l}.$$
\end{definition}
	
It is well known that for every abelian code $\C\subseteq\A(r_1,\dots,r_l)$, $D\left(\C\right)$ is closed under multiplication by $2$ in $\Z_{r_1}\times\dots\times\Z_{r_l}$, and so $D(\C)$ is a disjoint union of $2$-orbits modulo $(r_1,\dots,r_l)$. Conversely, every union of $2$-orbits modulo $(r_1,\dots,r_l)$ defines an abelian code in $\A(r_1,\dots,r_l)$. From now on, we will only write $2$-orbit, and the tuple of integers will always be clear by the context. 

To finish this section we give the notion of information set of a code in the context of abelian codes. For $P=\sum a_\bj X^\bj\in \A(r_1,\dots,r_l)$ and $\I$ be a subset of $\Z_{r_1}\times\cdots\times\Z_{r_l}$, we denote by $P_\I$ the vector $(a_\bj)_{\bj\in\I}\in \F^{\left|\I\right|}$. Now, for an abelian code $\C \subseteq \A(r_1,\dots,r_l)$ we denote by $\C_\I$ the linear code $\{P_\I:P\in\C\}\subseteq \F^{\left|\I\right|}$.

\begin{definition}
 An information set for an abelian code $\C\subseteq\A(r_1,\dots,r_l)$ with dimension $k$ is a set $\I\subseteq \Z_{r_1}\times\cdots\times\Z_{r_l}$ such that $\left|\I\right|=k$ and $\C_\I=\F^k$. 
 
 The complementary set $\left(\Z_{r_1}\times\cdots\times\Z_{r_l}\right)\setminus \I$ is called a set of check positions for $\C$.
\end{definition}

As usual, we denote by $\C^\bot$ the dual code of $\C$, that is, the set of codewords $v\in \A(r_1,\dots,r_l)$ such that $v\cdot u=0$, for all $u\in \C$, where $``\cdot"$ denotes the usual inner product. It is easy to see that any information set for $\C$ is a set of check positions for $\C^\bot$ and vice versa.

\section{Information sets for abelian codes}\label{InfoSets}

In \cite{BS} we introduced a method for constructing information sets for any multidimensional abelian code just in terms of its defining set. In this section we only recall the two-dimensional construction because it is the unique case that we will use. So, from now on we take $l=2$ and the ambient space will be $\A(r_1,r_2)$.

Let $e=(e_1,e_2)\in \Z_{r_1}\times\Z_{r_2}$. We define
\begin{equation*}
 m(e_1)=\left|C_{r_1}(e_1)\right|
\end{equation*}
and
\begin{equation}\label{parametrosm}
 m(e)=m(e_1,e_2)=\left|C_{2^{m(e_1)},r_2}(e_2)\right|.
\end{equation}

The construction of information sets is based on the computation of the parameters defined in (\ref{parametrosm}) on a special subset of the defining set of the given abelian code. Specifically this set has to satisfy the conditions described in the following definition. For any $A\subset\Z_{r_1}\times\Z_{r_2}$ we denote its projection onto the first coordinate by $A_1$.
	\begin{definition}\label{restricted representatives}
Let $D$ be a union of $2$-orbits modulo $(r_1,r_2)$ and $\overline{D}\subset D$ a complete set of representatives. Then $\overline D$ is called a set of restricted representatives if $\overline{D}_1$ is a complete set of representatives of the $2$-cyclotomic cosets modulo $r_1$ in $D_1$.
 \end{definition}
 
\begin{remark}
  In \cite{BS} we gave the notion of restricted representatives associated to a fixed ordering on the indeterminates $X_1,X_2$. In the two-dimensional case that definition is equivalent to Definition~\ref{restricted representatives} when we fix the ordering $X_1<X_2$. This will be our default ordering, so we will make no reference to the order on the indeterminates in the rest of the paper. 
\end{remark}

\begin{example}\label{ejemplo restricted}
 Consider $r_1=3, r_2=5$ and let $\C\subseteq\A(3,5)$ be the abelian code with defining set $\D(\C)=Q(1,1)\cup Q(1,2)\cup Q(0,0)$, where $Q(1,1)=\{(1,1),(2,2),(1,4),(2,3)\}$, $Q(1,2)=\{(1,2),(2,4),(1,3),(2,1)\}$ and $Q(0,0)=\{(0,0)\}$. Then, according to Definition~\ref{restricted representatives}, the set of representatives $\{(1,1),(2,1),(0,0)\}$ is not restricted, because $C_3(1)=C_3(2)$, while $\{(1,1), (1,2), (0,0)\}$ is indeed restricted. 
\end{example}
  
  Now, let $\C\subseteq\A(r_1,r_2)$ be an abelian code with defining set $D_\alpha\subseteq\Z_{r_1}\times\Z_{r_2}$, with respect to $\alpha=\{\alpha_1,\alpha_2\}$. Take $\overline D\subset D_\alpha$ a set of restricted representatives. 
  Given $e_1\in \overline D_1$, let
\begin{equation*}\label{R(e)}
 R(e_1)=\{e_2\in \Z_{r_{2}}\mid (e_1,e_2)\in \overline D\}.
\end{equation*} 

For each $e_1\in \overline D_{1}$, we define
	\begin{equation}\label{M(e)}
	 M(e_1)=\sum_{e_2\in R(e_1)} m\left(e_1,e_2\right)
	\end{equation} 
and we consider the values $\{M(e_1)\}_{e_1\in \overline D_{1}}$. Then we denote 
\begin{eqnarray*}
    f_1&=&\max\limits_{e_1\in \overline D_{1}}\{M(e_1)\} \qquad\text{ and }\\
    f_i&=&\max\limits_{e_1\in \overline D_{1}}\{M(e_1)\mid M(e_1)<f_{i-1}\}.
   \end{eqnarray*}
   
So, we obtain the sequence
   \begin{equation}\label{sequence f[i]}
    f_{1}>\dots>f_{s}>0=f_{s+1}, 
   \end{equation} 
that is, we denote by $f_{s}$ the minimum value of the parameters $M(\cdot)$ and we set $f_{s+1}=0$ by convention. Note that $M(e_1)>0$, for all $e_1\in \overline D_{1}$, by definition.

From the previous values $f_i$ we define for $i=1,\dots,s$
  \begin{equation}\label{lasg}
	g_{i}=\sum\limits_{M(e_1)\geq f_i}m(e_1) 
	\end{equation}
and then we obtain the sequence
  \begin{equation}\label{sequence gi}
    g_1<g_2<\cdots<g_s.
  \end{equation}  

Finally, we define the set 

  \begin{equation}\label{checkpositions}
 \Gamma(\C)=\{(i_1,i_2)\in \Z_{r_1}\times\Z_{r_2}\mid  \text{ there exists } 1\leq j\leq s \text{ with } f_{j+1}\leq i_2< f_{j}, \text{ and } 0\leq i_1<g_{j}\}.
\end{equation}

The following theorem, proved in \cite{BS} for any abelian code, establishes that $\Gamma(\C)$ is a set of check positions for $\C$, and consequently $\Gamma(\C)$ defines an information set for $\C^\bot$.

 \begin{theorem}\label{teorema principal check positions}
   Let $r_1, r_2$ be odd integers and let $\C$ be an abelian code in  $\A(r_1,r_2)$ with defining set $\D_\alpha(\C)$ with respect to $\alpha=\{\alpha_1,\alpha_2\}$. Then $\Gamma(\C)$ is a set of check positions for $\C$.
  \end{theorem}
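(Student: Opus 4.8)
The statement to prove is that the set $\Gamma(\C)$ defined in (\ref{checkpositions}) is a set of check positions for $\C$; equivalently, writing $\I=(\Z_{r_1}\times\Z_{r_2})\setminus\Gamma(\C)$, that $\I$ is an information set. Since $\A(r_1,r_2)$ is semisimple we have $\dim\C=r_1r_2-|\D_\alpha(\C)|$, so I would split the proof into two independent tasks: (i) the counting identity $|\Gamma(\C)|=|\D_\alpha(\C)|$, which forces $|\I|=\dim\C$; and (ii) showing that the restriction map $P\mapsto P_\I$ from $\C$ to $\F^{|\I|}$ has full rank. By (i) the domain and codomain of this map have equal dimension, so full rank is equivalent to $\C_\I=\F^{|\I|}$, and either injectivity (no nonzero codeword is supported on $\Gamma(\C)$) or surjectivity will suffice.

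For (i), the plan is a direct count. First I would record the orbit-size formula $|Q(e_1,e_2)|=m(e_1)\,m(e_1,e_2)$: since $2^t$ fixes $(e_1,e_2)$ iff $m(e_1)\mid t$ and, writing $t=m(e_1)u$, iff in addition $m(e_1,e_2)\mid u$. Because $\overline D$ is restricted, grouping the orbits of $\D_\alpha(\C)$ by their first coordinate gives $|\D_\alpha(\C)|=\sum_{e_1\in\overline D_1}m(e_1)\sum_{e_2\in R(e_1)}m(e_1,e_2)=\sum_{e_1\in\overline D_1}m(e_1)M(e_1)$ by (\ref{M(e)}). On the other hand, counting the staircase (\ref{checkpositions}) band by band yields $|\Gamma(\C)|=\sum_{j=1}^s g_j(f_j-f_{j+1})$. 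Substituting $g_j=\sum_{M(e_1)\ge f_j}m(e_1)$ from (\ref{lasg}), exchanging the order of summation, and telescoping $\sum_{j:\,f_j\le M(e_1)}(f_j-f_{j+1})=M(e_1)$ (using the convention $f_{s+1}=0$) turns $|\Gamma(\C)|$ into $\sum_{e_1\in\overline D_1}m(e_1)M(e_1)$, matching $|\D_\alpha(\C)|$.

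For (ii) I would pass to the semisimple decomposition in the first coordinate. Writing $\A(r_1)=\F[X_1]/\langle X_1^{r_1}-1\rangle\cong\prod_\rho K_\rho$ as a product of extension fields indexed by the $2$-cyclotomic cosets $\rho$ modulo $r_1$, with $[K_\rho:\F]=m(e_1)$ for $e_1\in\rho$, one obtains $\A(r_1,r_2)\cong\prod_\rho K_\rho[X_2]/\langle X_2^{r_2}-1\rangle$ and a matching decomposition $\C\cong\prod_\rho C_\rho$ into cyclic codes $C_\rho$ over $K_\rho$ of length $r_2$. The $2^{m(e_1)}$-cyclotomic cosets modulo $r_2$ determine the defining set of $C_\rho$, and each has size $m(e_1,e_2)$, so the $K_\rho$-codimension of $C_\rho$ is exactly $\sum_{e_2\in R(e_1)}m(e_1,e_2)=M(e_1)$. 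The plan is then to exploit the one-dimensional fact that any set of $M(e_1)$ consecutive coordinates is a set of check positions for $C_\rho$: at each height band $f_{j+1}\le i_2<f_j$ precisely the components with $M(e_1)\ge f_j$ still require a check coordinate, and these contribute a total $\F$-dimension $g_j=\sum_{M(e_1)\ge f_j}m(e_1)$, which is why the staircase allocates the columns $0\le i_1<g_j$ at that band. Matching this field-theoretic count with the integer positions $i_1$ is exactly where the restricted-representatives hypothesis enters: it guarantees that the first coordinate can be unpacked into an $\F$-basis compatible with the product $\prod_\rho K_\rho$, so that the columns $i_1\in[0,g_j)$ realise the required check positions across all components simultaneously.

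The hard part will be this last matching in (ii): proving that the \emph{spatial} positions $\I=\Gamma(\C)^c$, rather than an abstract set of $k$ transform coordinates, give a full-rank restriction. The interaction between the field structure in the first coordinate (of $\F$-degree $m(e_1)$) and the consecutive-positions structure in the second is delicate, and the non-example following Definition~\ref{restricted representatives} shows that the construction genuinely fails without the restricted hypothesis. I expect the cleanest way to close this gap is to exhibit an explicit $\F$-basis of $\C$ whose matrix of restrictions to $\I$ is triangular with nonzero diagonal, ordering the basis vectors by the pair $(M(e_1),i_2)$ and invoking the componentwise one-dimensional result to get the diagonal entries; establishing this triangularity, and checking that the restricted condition makes the ordering well defined, is where the real work lies.
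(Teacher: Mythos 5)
First, a point of comparison: the paper itself does not prove this theorem --- it is imported from \cite{BS}, and in the two-dimensional case at hand the construction coincides with Imai's \cite{Imai}, as the paper notes --- so your proposal must be measured against that argument. Your step (i) is correct and complete: $|Q(e_1,e_2)|=m(e_1)\,m(e_1,e_2)$, the grouping of orbits by first coordinate (legitimate precisely because $\overline D$ is restricted, which forces one single $e_1$ per coset of $D_1$), and the band-by-band telescoping of the staircase (\ref{checkpositions}) do give $|\Gamma(\C)|=|D_\alpha(\C)|$, so that injectivity of $P\mapsto P_\I$ on $\C$, with $\I=(\Z_{r_1}\times\Z_{r_2})\setminus\Gamma(\C)$, would finish the proof. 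Your frame for (ii) is also the right one: $\A(r_1,r_2)\cong\prod_\rho K_\rho[X_2]/\langle X_2^{r_2}-1\rangle$, $\C\cong\prod_\rho C_\rho$, with $\dim_{K_\rho}C_\rho=r_2-M(e_1)$. One correction of emphasis: the restricted-representatives hypothesis is not about ``unpacking a compatible $\F$-basis''; its role is to ensure that each coset $\rho$ occurring in $D_1$ corresponds to exactly one $e_1\in\overline D_1$, so that $M(e_1)$ in (\ref{M(e)}) aggregates the \emph{entire} defining set of the single component $C_\rho$. In the non-example after Definition~\ref{restricted representatives}, the defining set of one component is split between $e_1=1$ and $e_1=2$ and the $M$-values undercount the codimensions, which is what breaks the construction.

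The genuine gap is exactly where you place it, but your proposed repair (a basis of $\C$ whose restriction matrix to $\I$ is triangular) is not how the cited proof closes it, and you give no reason such a basis ordering exists; as stated, (ii) is unproven. The missing idea is much lighter: since (i) reduces everything to injectivity, suppose $P\in\C$ vanishes on $\I$, i.e.\ is supported on $\Gamma(\C)$, and apply your one-dimensional fact \emph{twice, once per variable}, in a downward induction on the bands. Rows $P_{i_2}(X_1)$ vanish for $i_2\geq f_1$ by the shape of $\Gamma(\C)$. Assume they vanish for $i_2\geq f_j$. Then each component $Q_\rho(X_2)=\sum_{i_2}\pi_\rho(P_{i_2})X_2^{i_2}\in C_\rho$ with $M(\rho)\geq f_j$ has $r_2-f_j\geq\dim_{K_\rho}C_\rho$ cyclically consecutive zero coefficients, and no nonzero word of a cyclic code of dimension $k$ has $k$ consecutive zero coordinates; hence $Q_\rho=0$ for all such $\rho$. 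Consequently every row $P_{i_2}$ with $f_{j+1}\leq i_2<f_j$ lies in the univariate cyclic code of length $r_1$ whose defining set is the union of the cosets $\rho$ with $M(\rho)\geq f_j$, of dimension $r_1-g_j$, while by the shape of $\Gamma(\C)$ that row vanishes on the $r_1-g_j$ consecutive positions $g_j\leq i_1<r_1$; the same one-dimensional fact, now in the $X_1$ direction over $\F$, kills the row. Descending through $j=1,\dots,s$ yields $P=0$, and together with (i) this proves the theorem. So your decomposition and your one-dimensional lemma are the right ingredients; what you are missing is this second, rows-in-an-ideal application in the first coordinate, which is what reconciles the spatial $i_1$-positions with the transform components --- no triangular basis is needed.
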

  
 Let us observe that given an abelian code $\C\subseteq \A(r_1,r_2)$ with defining set $D_\alpha(\C)$  if one chooses different primitive roots of unity, say $\gamma=\{\gamma_1,\gamma_2\}$, then the \textit{structure} of the $q$-orbits in $D_\gamma(\C)$ is the same as in $D_\alpha(\C)$, that is, we obtain the same values for the parameters (\ref{parametrosm}) and (\ref{M(e)}), so we get the same set of check positions $\Gamma(\C)$ (see \cite[p.100]{Th}). That is why we do not use any reference to the roots of unity taken in the notation of the set of check positions. So, in the rest of the paper, for any abelian code $\C$ we denote its defining set by $D(\C)$ and the corresponding set of check positions by $\Gamma(\C)$ without any mention of the primitive roots.

\begin{example}
 We continue with the code $\C$ considered in Example~\ref{ejemplo restricted}. Then $\D(\C)=Q(1,1)\cup Q(1,2)\cup Q(0,0)$ and we take $\overline\D=\{(1,1),(1,2),(0,0)\}$ as set of restricted representatives. From (\ref{M(e)}) we have that $M(1)=m(1,1)+m(1,2)=2+2=4$ and $M(0)=m(0,0)=1$. So $f_1=4> f_2=1>f_3=0$. On the other hand, from (\ref{lasg}) we obtain $g_1=m(1)=2, g_2=g_1+m(0)=3$. Therefore, 
 \begin{eqnarray*}
  \Gamma(\C)&=&\{(i_1,i_2)\in \Z_{r_1}\times\Z_{r_2}\mid  (1\leq i_2< 4 \text{ and } 0\leq i_1<2) \text { or } (0\leq i_2< 1 \text{ and } 0\leq i_1<3)\}\\
  &=&\{(0,0),(0,1),(0,2),(0,3),(1,0),(1,1),(1,2),(1,3),(2,0)\}
 \end{eqnarray*}
 is a set of check positions for $\C$.
\end{example}
 
\begin{remark}
 In \cite{BS} we showed how to construct the previous set of check positions for any multidimensional abelian code. As we have already mentioned, in this paper we only need to use the two-dimensional case which yields the same information set that was introduced by H. Imai in \cite{Imai}.
\end{remark}

\section{Cyclic codes as two-dimensional cyclic codes}\label{cyclics}
 
We are going to construct an information set for the punctured cyclic code of a Reed-Muller code viewed as a multidimensional abelian code. So, we are interested in applying the results of the previous section when the original abelian code is in fact cyclic. 

Let $\C^*$ be a binary cyclic code with length $n=r_1\cdot r_2$. All throughout this section we assume that $\gcd(r_1,r_2)=1$ and $r_1,r_2$ are odd. Then let $T:\Z_n\rightarrow \Z_{r_1}\times \Z_{r_2}$ be an isomorphism, and let us denote $T=(T_1,T_2)$; that is, $T_i(e)$ is the projection of $T(e)$ onto $\Z_{r_i}$, for $i=1,2$ and any $e\in \Z_n$. 

The proof of the following result is straightforward.

\begin{lemma}\label{T-orbitas}
  For every $e\in \Z_n$ the equality $T\left(C_n(e)\right)=Q\left(T(e)\right)$ holds; in particular $\left|C_n(e)\right|=\left|Q(T(e))\right|$. Moreover, one has that the projection of $Q(T(e))$ onto $\Z_{r_1}$ is equal to $C_{r_1}(T_1(e))$.
\end{lemma}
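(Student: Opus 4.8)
The plan is to exploit the fact that the cyclotomic coset $C_n(e)$ and the $2$-orbit $Q(T(e))$ are both orbits under multiplication by $2$, and that the isomorphism $T$ is compatible with this operation. The crucial observation is that $T$ need only be an additive group isomorphism: since multiplication by any power of $2$ is nothing but iterated addition, an additive map automatically commutes with it. Concretely, $T(2^i e) = 2^i\,T(e)$ for every $i\in\N$, where on the right-hand side $2^i T(e)$ is computed coordinatewise in $\Z_{r_1}\times\Z_{r_2}$, so that $T(2^i e) = \left(2^i T_1(e),\, 2^i T_2(e)\right)$.

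Granting this, I would carry out the argument in three short steps. First, I apply $T$ elementwise to the definition $C_n(e)=\{2^i e \mid i\in\N\}$ and use the displayed identity to obtain
\[
T\left(C_n(e)\right)=\left\{\left(2^i T_1(e),\,2^i T_2(e)\right)\mid i\in\N\right\},
\]
which is precisely $Q(T(e))$ by Definition~\ref{qorbita}. Second, the equality of cardinalities $\left|C_n(e)\right|=\left|Q(T(e))\right|$ is immediate, since $T$ is a bijection and hence restricts to a bijection between a set and its image. Third, for the projection statement I simply read off the first coordinate of each element of $Q(T(e))$, obtaining $\{2^i T_1(e)\mid i\in\N\}=C_{r_1}(T_1(e))$ by the definition of the $2$-cyclotomic coset modulo $r_1$.

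There is really no serious obstacle here, in agreement with the assertion that the proof is straightforward; the only point that merits care is the justification of $T(2^i e)=2^i T(e)$. I would stress that this uses only additivity of $T$ and not any multiplicative structure: the reduction of $2^i T_j(e)$ to its canonical representative modulo $r_j$ is purely bookkeeping and is absorbed in the arithmetic of $\Z_{r_j}$. Everything else is a direct unwinding of the definitions of $C_n(\cdot)$, $Q(\cdot)$ and $C_{r_1}(\cdot)$.
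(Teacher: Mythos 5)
Your proof is correct, and since the paper omits the argument entirely (stating only that it is ``straightforward''), what you have written is precisely the intended routine verification: additivity of $T$ gives $T(2^i e)=(2^i T_1(e),\,2^i T_2(e))$, after which all three claims follow by unwinding Definition~\ref{qorbita} and the definition of $C_{r_1}(\cdot)$. Your remark that only the additive structure of $T$ is used (no multiplicativity), with reduction to canonical representatives absorbed into the arithmetic of $\Z_{r_j}$, is exactly the one point worth making explicit.
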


Let $\D^*=D_\alpha(\C^*)\subseteq \Z_n$ be the defining set of $\C^*$ with respect to an arbitrary primitive $n$-th  root of unity $\alpha$. Then, since there exist integers $\eta_1,\eta_2$ such that $\eta_1 r_1+\eta_2 r_2=1$, we have that $\alpha_1=\alpha^{\eta_2 r_2}$ and $\alpha_2=\alpha^{\eta_1 r_1}$ are primitive $r_1$-th and $r_2$-th roots of unity respectively. Fix $T$ an isomorphism as above and set $T(1)=(\delta_1,\delta_2)$; observe that $\gcd(\delta_1,r_1)=1$ and $\gcd(\delta_2,r_2)=1$. We define the abelian code $\C=\C_{(\C^*,T)}\subseteq \A(r_1,r_2)$ as the code with defining set $\D=D(\C)=T(\D^*)$, with respect to $(\beta_1,\beta_2)=(\alpha_1^{\delta_1^{-1}},\alpha_2^{\delta_2^{-1}})$. In this situation, we have that $\C$ is the image of $\C^*$ by the map 
  \begin{eqnarray*}
  \nonumber \A(n) &\longrightarrow& \A(r_1,r_2)\\
   \sum a_{i}X^i&\hookrightarrow& \sum b_{jl}X^jY^l,
  \end{eqnarray*}
  where $b_{jl}=a_i$ if and only if $T(i)=(j,l)$. Therefore, $\I$ is an information set for $\C$ if and only if $T^{-1}(\I)$ is an information set for $\C^*$. Usually, we omit the reference to the original cyclic code and the isomorphism $T$ in the notation of the new abelian code, and we will write $\C$ instead of $\C_{(\C^*,T)}$; those references will by clear by the context.
  
  For the rest of this section we assume that we have fixed a choice of $\alpha$ and $T$ (consequently, the roots $\beta_1,\beta_2$ are also fixed).

Then, the goal of this section is to describe the values of the parameters defined in (\ref{parametrosm}) and (\ref{M(e)}), used to get a set of check positions  for $\C$ (see \ref{checkpositions}), just in terms of the defining set of $\C^*$. This will allow us to define an information set for the original cyclic code $\C^*$ without any mention to the abelian code $\C$. 


\begin{remark}\label{muchos conjuntos info}


Let us observe that, since the defining set $\D^*=D_\alpha(\C^*)$ depends on the choice of $\alpha$, if we fix another one $\beta$, we get a new defining set $D_\beta(\C^*)$ and consequently we obtain a different abelian code $\C$. However, this new abelian code has a defining set with the same structure of $q$-orbits so it yields the same set $\Gamma(\C)$ (see paragraph after Theorem~\ref{teorema principal check positions}).

On the other hand, it is easy to prove that any change of the isomorphism $T$  is equivalent to a change of the primitive root $\alpha$ in order to get the same abelian code in $\A(r_1,r_2)$. Nevertheless, this implies that, since we get the same set $\Gamma(\C)$, we could obtain a different set of check positions $T^{-1}(\Gamma(\C))$ at the end. Therefore, by the method we are describing in this section we get at most as many information sets as there are isomorphisms from $\Z_n$ to $\Z_{r_1}\times\Z_{r_2}$ exist.

\end{remark}

Let $\overline {\D^*} \subseteq \D^*\subseteq \Z_n$ be a complete set of representatives of the $2$-cyclotomic cosets modulo $n$ in $\D^*$. As we have noted, for any $e\in \Db$, $T(C_n(e))=Q(T(e))$, so $T(\overline{D^*})$ is a complete set of representatives of the $2$-orbits modulo $(r_1,r_2)$ in $T(D^*)$. However, it might not be a set of restricted representatives according to Definition~\ref{restricted representatives}. Then we introduce the following definition.

\begin{definition}\label{suitable}
 Let $\overline {\D^*} \subseteq \D^*$ be a complete set of representatives of the $2$-cyclotomic cosets modulo $n$ in $\D^*$. Then $\overline{\D^*}$ is said to be a suitable set of representatives if $T(\overline{\D^*})$ is a set of restricted representatives of the $2$-orbits in $T(\D^*)$.
\end{definition}

The next lemma shows that we will always be able to take a suitable set of representatives in $\D^*$.

\begin{lemma}
 Let $\D^*$ be the defining set of a cyclic code $\C^*\subseteq\A(n)$. Let $\C\subseteq\A(r_1,r_2)$ be the abelian code with defining set $T(\D^*)$. Then, there always exists a suitable set of representatives $\overline{\D^*}\subseteq \D^*$.
\end{lemma}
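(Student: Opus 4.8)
The plan is to construct a suitable set of representatives explicitly, starting from an arbitrary complete set of representatives and correcting it orbit by orbit so that the projection onto the first coordinate becomes a set of restricted representatives. Concretely, I would begin with \emph{any} complete set of representatives $\overline{\D^*}$ of the $2$-cyclotomic cosets modulo $n$ in $\D^*$, and consider its image $T(\overline{\D^*})$, which by Lemma~\ref{T-orbitas} is a complete set of representatives of the $2$-orbits in $T(\D^*)$. The only obstruction to being restricted is that the first-coordinate projections $T_1(e)$ may fail to be a complete set of representatives of the $2$-cyclotomic cosets modulo $r_1$ appearing in $T(\D^*)_1$: two distinct orbit representatives $e,e'\in\overline{\D^*}$ could satisfy $C_{r_1}(T_1(e))=C_{r_1}(T_1(e'))$, that is, their first coordinates lie in the same cyclotomic coset modulo $r_1$.

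The key observation is that within a single $2$-orbit $Q(T(e))$ we have freedom to choose \emph{which} element serves as the representative, and by Lemma~\ref{T-orbitas} the projection of $Q(T(e))$ onto $\Z_{r_1}$ equals the whole cyclotomic coset $C_{r_1}(T_1(e))$. So for each orbit I can pick a representative whose first coordinate is any prescribed element of $C_{r_1}(T_1(e))$. The strategy is then a greedy assignment: list the distinct $2$-cyclotomic cosets modulo $r_1$ that occur as first-coordinate projections of orbits in $T(\D^*)$, and for each orbit choose its representative so that the chosen first coordinates, as the orbit ranges over all orbits sharing a given cyclotomic coset modulo $r_1$, are pairwise distinct canonical representatives of that coset. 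Because each orbit $Q(T(e))$ surjects onto the full cyclotomic coset $C_{r_1}(T_1(e))$ under the first projection, there are always enough distinct first coordinates available, provided the number of orbits mapping into a fixed coset $C_{r_1}(a)$ does not exceed $|C_{r_1}(a)|$; I would verify this counting constraint holds, which should follow from the $2$-orbit structure and the coprimality $\gcd(r_1,r_2)=1$.

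Having selected these first coordinates, I then transport the choice back through $T^{-1}$: since $T$ is a bijection and maps $2$-cyclotomic cosets modulo $n$ onto $2$-orbits, each selected element $(j,l)\in Q(T(e))$ corresponds to a unique $T^{-1}(j,l)\in C_n(e)\subseteq\D^*$, and replacing $e$ by this element in $\overline{\D^*}$ keeps it a complete set of representatives modulo $n$. Carrying this out for every orbit yields a modified $\overline{\D^*}$ whose image under $T$ is, by construction, a set whose first-coordinate projections form a complete set of representatives of the cyclotomic cosets modulo $r_1$ in $T(\D^*)_1$ --- precisely the restricted condition of Definition~\ref{restricted representatives}. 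Hence $\overline{\D^*}$ is suitable in the sense of Definition~\ref{suitable}.

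I expect the main obstacle to be the counting argument establishing that distinct first-coordinate representatives can always be chosen without collision, i.e.\ that for each cyclotomic coset $C_{r_1}(a)$ modulo $r_1$ the number of $2$-orbits in $T(\D^*)$ projecting onto it is at most $|C_{r_1}(a)|$. This should be handled by relating the size of a $2$-orbit $Q(T(e))$ to $|C_n(e)|$ via Lemma~\ref{T-orbitas} and to $m(T_1(e))=|C_{r_1}(T_1(e))|$, using that the fiber of the first projection restricted to $Q(T(e))$ has constant size $|Q(T(e))|/|C_{r_1}(T_1(e))|$, so that the orbits lying over a fixed coset partition a set whose cardinality forces the desired inequality. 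The remaining steps are essentially bookkeeping about the compatibility of $T$ with the two coset structures.
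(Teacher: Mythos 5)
Your overall scaffolding coincides with the paper's: start from an arbitrary complete set of representatives of the $2$-cyclotomic cosets modulo $n$ in $\D^*$, use the fact (Lemma~\ref{T-orbitas}) that each orbit $Q(T(e))$ projects onto the full coset $C_{r_1}(T_1(e))$ to adjust each representative within its coset $C_n(e)$, and transport the choice back through $T^{-1}$. However, there is a genuine gap: you have inverted the condition of Definition~\ref{restricted representatives}. That definition asks that $\overline{D}_1$, the first-coordinate projection of the chosen representatives, be a complete set of representatives of the $2$-cyclotomic cosets modulo $r_1$ in $D_1$ --- that is, it must contain \emph{exactly one} element from each such coset. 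So when several orbits project onto the same coset $C_{r_1}(a)$, their chosen representatives must all carry the \emph{same} first coordinate $a$; you instead require these first coordinates to be \emph{pairwise distinct} elements of the coset. Example~\ref{ejemplo restricted} is precisely a counterexample to your reading: the set $\{(1,1),(2,1),(0,0)\}$, whose two representatives over $C_3(1)$ have distinct first coordinates $1$ and $2$, is declared \emph{not} restricted (because $C_3(1)=C_3(2)$), whereas the restricted set is $\{(1,1),(1,2),(0,0)\}$, in which the first coordinate $1$ is repeated.

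Because of this inversion, the ``counting constraint'' you flag as the main obstacle is not only unnecessary but false in general, so your argument would break exactly where you anticipate. Take $n=15$, $r_1=3$, $r_2=5$ and $T$ the Chinese Remainder isomorphism: the cosets $C_{15}(1)$, $C_{15}(7)$ and $C_{15}(5)$ map to the three orbits $Q(1,1)$, $Q(1,2)$ and $Q(1,0)$, all of which project onto $C_3(1)=\{1,2\}$; for any cyclic code whose defining set contains these three cosets, no choice of pairwise distinct first coordinates exists, since three orbits lie over a coset of size two. With the correct condition no counting is needed at all: as in the paper's proof, fix a complete set $A$ of representatives of the $2$-cyclotomic cosets modulo $r_1$ occurring in $\{T_1(e):e\in\D^*\}$, and for each representative $e$ with $T_1(e)\in C_{r_1}(a)$ replace $e$ by the element $e'\in C_n(e)$ satisfying $T_1(e')=a$, which exists because $Q(T(e))_1=C_{r_1}(T_1(e))$. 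This single substitution per representative already produces a suitable set in the sense of Definition~\ref{suitable}.
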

\begin{proof}
From Lemma~\ref{T-orbitas}, for any $e\in \D^*$ we have that $Q(T(e))_1=C_{r_1}(T_1(e))$. Let $\Db$ be a complete set of representatives of the $2$-cycloctomic cosets modulo $n$ in $\D^*$. Take $A$ a complete set of the $2$-cyclotomic cosets modulo $r_1$ in $\{T_1(e): e\in \D^*\}$. Now, take $e\in\Db$ and let $a\in A$ be such that $T_1(e)\in C_{r_1}(a)$. Then there exists $e'\in C_n(e)$ that satisfies $T_1(e')=a$; if $e'\neq e$ then we redefine $\Db$ by replacing $e$ with $e'$. This concludes the proof.
\end{proof}


\begin{example}\label{examplesuitable}
 Let us consider binary cyclic codes of length $n=21$, so $r_1=3, r_2=7$. Let $\C^*\subseteq\A(21)$ be the cyclic code such that its defining set, with respect to a $21$-th primitive root of unity $\alpha$, is the following union of $2$-cyclotomic cosets modulo $21$, $\D^*=\{1,2,4,8,11,16\}\cup \{3,6,12\}\cup \{7,14\}$. Take $T:\Z_{21}\rightarrow\Z_{3}\times\Z_{7}$ the isomorphism given by the Chinese Remainder Theorem. Then $\C$ denotes the abelian code in $\A(3,7)$ with defining set $\D=T(\D^*)$, that is, the following union of $2$-orbits modulo $(3,7)$, $T(\D)=\{(1,1),(2,2),(1,4),(2,1),(2,4),(1,2)\}\cup\{(0,3),(0,6),(0,5)\}\cup\{(1,0),(2,0)\}$. The set $B=\{2,3,7\}\subseteq\D^*$ is a complete set of representatives of the $2$-cyclotomic cosets in $\D^*$; however it is not a suitable set of representatives since $T(B)=\{(2,2),(0,3),(1,0)\}$ is not a restricted set of representatives in $\D$ (note that $C_3(1)=C_3(2)$). We may solve this problem by replacing $2$ by $1$. Then $\Db=\{1,3,7\}$ is a suitable set of representatives because $T(\Db)=\{(1,1),(0,3),(1,0)\}$ is a restricted set of representatives in $\D$.

\end{example}

Now we deal with the construction of a set of check positions for the abelian code $\C=\C_{(\C^*,T)}\subseteq \A(r_1,r_2)$ just in terms of $\D^*$, the defining set of the cyclic code $\C^*$.

Given $\overline {\D^*}\subseteq \D^*$ a suitable set of representatives, we consider $\sim$ the equivalence relation on $\overline {\D^*}$ given by the rule 
\begin{equation}\label{relacionU}
 a\sim b\in \overline {\D^*} \text{ if and only if } a\equiv b \mod r_1.
\end{equation}
  From now on we denote by $\Db$ a suitable set of representatives and $\U \subseteq \Db$ a complete set of representatives of the equivalence classes related to $\sim$. In addition, for any $u\in \U$ we write 
\begin{equation*}
\Or(u)=\{a\in\Db \mid a\sim u \}.
\end{equation*}

Observe that if $\overline \D=T(\overline{\D^*})$ then ${\overline D_1}=T_1(\U)$. Furthermore, for any $e\in \overline{\D^*}$ there exists a unique $u\in \U$ such that $T_1(u)=T_1(e)$. By abuse of notation, we will write 
$$C_{r_1}(e)=C_{r_1}(T_1(e))=C_{r_1}(T_1(u))=C_{r_1}(u).$$

\begin{example}
 Following Example \ref{examplesuitable}, from the suitable set of representatives $\Db=\{1,3,7\}$ we may define, for instance, $\U=\{1,3\}$. Note that $\Or(1)=\{1,7\}$. 
\end{example} 
 
 \begin{lemma}\label{relacion mTe con su i}
 For any $e\in \overline{\D^*}$ one has that $$m\left(T(e)\right)=\frac{|C_n(e)|}{|C_{r_1}(u)|},$$ where $u$ is the unique element in $\U$ such that $T_1(u)=T_1(e)$.
\end{lemma}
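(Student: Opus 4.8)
The plan is to translate the statement into facts about orbit sizes under multiplication by powers of $2$, and then to track how the size of the two-dimensional orbit $Q(T(e))$ factors according to its two coordinates. First I would invoke Lemma~\ref{T-orbitas} to replace $|C_n(e)|$ by $|Q(T(e))|$, so that the claimed identity becomes $|Q(T(e))| = |C_{r_1}(u)|\cdot m(T(e))$. Writing $T(e)=(T_1(e),T_2(e))$ and setting $\gamma=|C_{r_1}(T_1(e))|$, the abuse of notation established just before the statement gives $\gamma=|C_{r_1}(u)|$, and by (\ref{parametrosm}) we have $m(T(e))=|C_{2^{\gamma},r_2}(T_2(e))|$. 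Hence it suffices to prove $|Q(T(e))| = \gamma\cdot m(T(e))$.

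Next I would express each orbit size as a minimal period. The size $|Q(T(e))|$ is the least positive integer $t$ with $(T_1(e)\,2^{t},T_2(e)\,2^{t})=(T_1(e),T_2(e))$ in $\Z_{r_1}\times\Z_{r_2}$, that is, the least $t$ that simultaneously fixes both coordinates. The first coordinate is fixed by multiplication by $2^{t}$ exactly when $\gamma\mid t$, since $\gamma=|C_{r_1}(T_1(e))|$ is by definition the least positive integer fixing $T_1(e)$ modulo $r_1$.

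The heart of the argument is then the factorization. Restricting to the admissible multiples $t=\gamma k$, the condition on the second coordinate reads $T_2(e)\,(2^{\gamma})^{k}\equiv T_2(e)\pmod{r_2}$, which is precisely the statement that $k$ is a period of $T_2(e)$ under multiplication by $2^{\gamma}$ modulo $r_2$. The least such $k$ is, by the definition of the $2^{\gamma}$-cyclotomic coset, exactly $|C_{2^{\gamma},r_2}(T_2(e))|=m(T(e))$. Since every $t$ fixing the first coordinate is forced to be a multiple of $\gamma$, and $k$ is minimal, the least admissible $t$ equals $\gamma\cdot m(T(e))$; thus $|Q(T(e))|=\gamma\cdot m(T(e))$, which is the desired identity.

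I expect the main obstacle to be making precise that the two minimality conditions combine multiplicatively rather than, say, through a least common multiple. The point is that because the constraint $\gamma\mid t$ is imposed by the first coordinate before anything is said about the second, the search for the joint period reduces cleanly to a search over $k$ within the single $2^{\gamma}$-cyclotomic coset governing the second coordinate. Matching the stride $\gamma$ of that coset to the induced action $(2^{\gamma})^{k}$ on $T_2(e)$ is the one place where the definitions must be aligned exactly; once the divisibility $\gamma\mid t$ is in hand, the remainder is routine.
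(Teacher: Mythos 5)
Your proof is correct and follows essentially the same route as the paper: reduce via Lemma~\ref{T-orbitas} to the factorization $\left|Q(T(e))\right|=m(T_1(e))\cdot m(T(e))$ and identify $m(T_1(e))=\left|C_{r_1}(u)\right|$ through the stated abuse of notation. The only difference is that you actually prove the factorization, via the minimal-period argument (every exponent fixing the first coordinate is a multiple of $\gamma=\left|C_{r_1}(T_1(e))\right|$, and among those the least fixing the second coordinate is $\gamma\cdot\left|C_{2^{\gamma},r_2}(T_2(e))\right|$), a step the paper asserts directly from Lemma~\ref{T-orbitas} and the definition of the parameters in (\ref{parametrosm}).
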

\begin{proof}
 By using the equalities in Lemma~\ref{T-orbitas} and the definition of $T$ one has that $\left|Q(T(e))\right|=m(T_1(e))\cdot m(T(e))$. On the other hand $m(T_1(e))=m(T_1(u))=\left|C_{r_1}(T_1(u))\right|=\left|C_{r_1}(u)\right|$, so we are done.
\end{proof}
 
 The following proposition shows how we can write the parameters (\ref{parametrosm}) over the elements of $\overline{\D}=T(\Db)\subseteq\Z_{r_1}\times\Z_{r_2}$ in terms of elements in $\U\subseteq \Db$.
 
 \begin{proposition}\label{relacion del m de los de Re con su u}
  Take $e\in \Db$ and $u\in \U$ (unique) such that $T_1(u)=T_1(e)$. Let $\overline \D=T(\Db)\subseteq\Z_{r_1}\times\Z_{r_2}$. For each $k\in R(T_1(e))$ there exists a unique $v_k\in  \Or(u)\subseteq\Db$ satisfying $T(v_k)=(T_1(e),k)$ and 
  $$m(T_1(e),k)=\frac{|C_n(v_k)|}{|C_{r_1}(u)|}.$$
\end{proposition}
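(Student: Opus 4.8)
The plan is to unpack the notation and reduce everything to the two facts already established: Lemma~\ref{T-orbitas}, which identifies cyclotomic cosets with $2$-orbits under $T$, and Lemma~\ref{relacion mTe con su i}, which gives the formula $m(T(e)) = |C_n(e)|/|C_{r_1}(u)|$. First I would recall the definition of $R(T_1(e))$: by the paragraph following Theorem~\ref{teorema principal check positions} and the definition of $R(e_1)$, we have $R(T_1(e)) = \{k \in \Z_{r_2} \mid (T_1(e),k) \in \overline D\}$, where $\overline D = T(\Db)$. So fixing $k \in R(T_1(e))$ means exactly that $(T_1(e),k)$ is one of the points of $T(\Db)$, hence there is some $v_k \in \Db$ with $T(v_k) = (T_1(e),k)$. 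Since $T$ is an isomorphism (in particular injective), this $v_k$ is unique, which settles the existence-and-uniqueness clause.

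Next I would verify that $v_k \in \Or(u)$. Since $T(v_k) = (T_1(e),k)$, its first projection is $T_1(v_k) = T_1(e) = T_1(u)$, so by the definition of the relation $\sim$ in~(\ref{relacionU}) together with the fact that $T_1$ records the residue modulo $r_1$, we get $v_k \sim u$, i.e. $v_k \in \Or(u)$. This is essentially a bookkeeping step: one must only observe that $v_k$ and $u$ agree in the first coordinate under $T$, and that $\Or(u)$ was defined precisely as the class of elements of $\Db$ sharing that first coordinate.

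Finally, the formula for $m(T_1(e),k)$ follows by applying Lemma~\ref{relacion mTe con su i} to the element $v_k \in \overline{\D^*}$. Indeed $T(v_k) = (T_1(e),k)$, so $m(T_1(e),k) = m(T(v_k))$, and the unique element of $\U$ matching the first coordinate of $v_k$ is the same $u$ (since $T_1(v_k) = T_1(u)$ and such an element of $\U$ is unique). Lemma~\ref{relacion mTe con su i} then yields $m(T(v_k)) = |C_n(v_k)|/|C_{r_1}(u)|$, which is exactly the claimed identity.

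I do not expect any serious obstacle here; the proposition is a repackaging of earlier results once the notation $R(\cdot)$, $\sim$, and $\Or(\cdot)$ is correctly interpreted. The only point demanding a little care is ensuring that the $u \in \U$ attached to $e$ in the statement coincides with the $u$ that Lemma~\ref{relacion mTe con su i} attaches to $v_k$; this is guaranteed because both are characterised by $T_1(u) = T_1(e) = T_1(v_k)$ and by the uniqueness of such an element in $\U$ (each equivalence class under $\sim$ meets $\U$ in exactly one point). Keeping that identification explicit is what makes the final substitution legitimate.
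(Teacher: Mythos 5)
Your proposal is correct and follows essentially the same path as the paper's proof: existence and uniqueness of $v_k$ from $(T_1(e),k)\in\overline\D=T(\Db)$ and the injectivity of $T$, membership $v_k\in\Or(u)$ from $T_1(v_k)=T_1(e)=T_1(u)$, and the cardinality formula. The only cosmetic difference is that you invoke Lemma~\ref{relacion mTe con su i} applied to $v_k$ (correctly matching its $u$ with the $u$ of the statement via uniqueness in $\U$), whereas the paper inlines the same computation $\left|C_n(v_k)\right|=\left|Q(T(v_k))\right|=m(T_1(e))\cdot m(T_1(e),k)=\left|C_{r_1}(u)\right|\cdot m(T_1(e),k)$.
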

\begin{proof}
Take $k\in R(T_1(e))$. Then $(T_1(e),k)\in\overline\D=T(\Db)$, so there exists a unique $v_k\in\Db$ such that $T(v_k)=(T_1(e),k)$. Moreover, $v_k\in \Or(u)$ because $T_1(v_k)=T_1(e)=T_1(u)$. Now
$$\left|C_n(v_k)\right|=\left|Q(T(v_k))\right|=\left|Q(T_1(e),k)\right|=m(T_1(e))\cdot m(T_1(e),k)=m(T_1(u))\cdot m(T_1(e),k)=\left|C_{r_1}(u)\right|\cdot m(T(e_1),k).$$

\end{proof}
 
Now, we use the previous result to obtain the values $M(\cdot)$ (defined in (\ref{M(e)})) corresponding with the set $\overline{\D}=T(\Db)\subset\Z_{r_1}\times\Z_{r_2}$, in terms of the elements in $\U\subseteq\Z_n$.

\begin{theorem}\label{calculo del Mi}
For each $(e_1,e_2)\in \overline\D=T(\Db)$, there exists a unique $u\in \U$ such that $T_1(u)=e_1$ and 
  $$M(e_1)=\frac{1}{|C_{r_1}(u)|}\sum_{v\in  \Or(u)}|C_n(v)|=\sum_{v\in \Or(u)}|C_{2^{|C_{r_1}(u)|},r_2}(T_2(v))|.$$
\end{theorem}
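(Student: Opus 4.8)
The uniqueness of $u\in\U$ with $T_1(u)=e_1$ is immediate from the setup: since $\overline D_1=T_1(\U)$ and $\U$ is a complete set of representatives of the equivalence classes under $\sim$, exactly one $u\in\U$ satisfies $T_1(u)=e_1$. The plan for the two displayed equalities is to sum the local identity of Proposition~\ref{relacion del m de los de Re con su u} over the index set $R(e_1)$; the crucial structural step is to recognise $R(e_1)$ and $\Or(u)$ as being in natural bijection, after which both equalities reduce to reindexing and an unfolding of the definitions.

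First I would establish that the assignment $k\mapsto v_k$ furnished by Proposition~\ref{relacion del m de los de Re con su u} is a bijection from $R(e_1)$ onto $\Or(u)$. The proposition already gives, for each $k\in R(e_1)$, a unique $v_k\in\Or(u)$ with $T(v_k)=(e_1,k)$, and injectivity of $k\mapsto v_k$ is then forced by the injectivity of $T$. The only point genuinely needing an argument is surjectivity: given $v\in\Or(u)$ one has $v\equiv u\pmod{r_1}$, hence $T_1(v)=T_1(u)=e_1$ and $T(v)=(e_1,T_2(v))$; since $v\in\Db$ we get $(e_1,T_2(v))\in\overline D$, so $T_2(v)\in R(e_1)$, and the uniqueness clause of the proposition yields $v=v_{T_2(v)}$. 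This identifies $\Or(u)$ with $R(e_1)$ via the inverse map $v\mapsto T_2(v)$.

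With the bijection in hand, the first equality follows by rewriting the defining sum (\ref{M(e)}):
\begin{equation*}
M(e_1)=\sum_{k\in R(e_1)}m(e_1,k)=\sum_{k\in R(e_1)}\frac{|C_n(v_k)|}{|C_{r_1}(u)|}=\frac{1}{|C_{r_1}(u)|}\sum_{v\in\Or(u)}|C_n(v)|,
\end{equation*}
where the middle step is exactly Proposition~\ref{relacion del m de los de Re con su u} and the last step reindexes along the bijection. For the second equality I would expand each summand through the definition (\ref{parametrosm}): for $v\in\Or(u)$ one has $m(T(v))=|C_{2^{m(T_1(v))},r_2}(T_2(v))|$, and because $T_1(v)=T_1(u)$ gives $m(T_1(v))=|C_{r_1}(T_1(u))|=|C_{r_1}(u)|$, the exponent is constant across $\Or(u)$. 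Combining this with the identity $|C_n(v)|/|C_{r_1}(u)|=m(T(v))$ (again from Proposition~\ref{relacion del m de los de Re con su u}, or equivalently Lemma~\ref{relacion mTe con su i}) converts the previous display into $\sum_{v\in\Or(u)}|C_{2^{|C_{r_1}(u)|},r_2}(T_2(v))|$, as required.

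The only substantive step is the surjectivity of $k\mapsto v_k$, that is, verifying that every element of the $\sim$-class $\Or(u)$ contributes a second coordinate that actually lies in $R(e_1)$; once that bijection is secured, the remainder is bookkeeping resting on Proposition~\ref{relacion del m de los de Re con su u} together with the definitions (\ref{parametrosm}) and (\ref{M(e)}).
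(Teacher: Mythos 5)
Your proof is correct and takes essentially the same route as the paper's: both arguments hinge on showing that $k\mapsto v_k$ from Proposition~\ref{relacion del m de los de Re con su u} is a bijection between $R(e_1)$ and $\Or(u)$ (injectivity because distinct $k$'s give distinct images under $T$, surjectivity because any $v\in\Or(u)$ has $T(v)=(e_1,T_2(v))\in\overline\D$, forcing $T_2(v)\in R(e_1)$ and $v=v_{T_2(v)}$), after which the first equality is a reindexing of the sum defining $M(e_1)$. The only difference is that you explicitly unfold the second displayed equality via definition~(\ref{parametrosm}) and Lemma~\ref{relacion mTe con su i}, a step the paper leaves implicit since it is immediate from the definitions.
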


\begin{proof}
Take $(e_1,e_2)\in \overline\D$ and $e\in \Db$ such that $T(e)=(e_1,e_2)$. Then
$$M(e_1)=\sum\limits_{k\in R(e_1)}m(e_1,k)=\sum\limits_{k\in R(e_1)}m(T_1(e),k)=M(T_1(e)).$$

By Proposition~\ref{relacion del m de los de Re con su u}, for any $k\in R(e_1)$ there exists a unique $v_k\in \Or(u)\subseteq\Db$ such that $T(v_k)=(T_1(e),k)$ and $m(T_1(e),k)=\frac{|C_n(v_k)|}{|C_{r_1}(u)|}$. Now, note that if $k\neq k'\in R(e_1)$ then $v_k\neq v_{k'}$ because $T(v_k)=(e_1,k)\neq(e_1,k')=T(v_{k'})$. Finally, if $v\in \Or(u)$ then $T_1(v)=T_1(u)=e_1$, and then there exists $k\in R(e_1)$ such that $T(v)=(e_1,k)$, so $v=v_k$. This finishes the proof.
\end{proof}

\vspace{0.5cm}

To sum up, fixing a $n$-th primitive root of unity $\alpha$ and an isomorphim $T:\Z_n\rightarrow \Z_{r_1}\times\Z_{r_2}$, let $\C^*$ be a cyclic code in $\A(n)$ with defining set with respect to $\alpha$, $\D^*$, and let $\C=\C_{(\C^*,T)}\subseteq \A(r_1,r_2)$ be the abelian code with defining set $\D=T(\D^*)$ (taking as reference the suitable primitive roots of unity mentioned at the beginning of this section). We have shown that we can take a suitable set of representatives $\Db\subseteq\D^*$ and a set $\U\subseteq \overline {\D^*}$, with $T_1(\U)=\overline\D_1$ $(\overline \D=T(\Db))$, in such a way that we are able to construct the set of check positions given in (\ref{checkpositions}) for $\C$ in terms of $\U$ as follows: 

 Since for any $e_1\in \overline\D_1$  there exists a unique element $u\in \U$ that satisfies $T_1(u)=e_1$, by abuse of notation, we may write $M(u)=M(T_1(u))=M(e_1)$. Then, the set of values (\ref{M(e)}) can be described as

\begin{equation}\label{M(u)}
 \left\{M(u)=\frac{1}{|C_{r_1}(u)|}\sum_{v\in \Or(u)}|C_n(v)| \tq u\in \U\right\}
\end{equation} 
which yields the sequence $f_{1}>\dots>f_{s}>0=f_{s+1}$ (see (\ref{sequence f[i]})). On the other hand, for any $k=1,\dots,s$ the values (\ref{lasg}) can be computed as 

$$g_k=\sum_{\doble{u\in \U}{M(u)\geq f_k}}|C_{r_1}(u)|$$
which give us the sequence $g_1<g_2<\cdots<g_s$ (see (\ref{sequence gi})) and hence the set $\Gamma(\C)$.

\begin{remark}
 From the development of the method we have described, the reader may note that the isomorphism $T$ has no influence on the construction of $\Gamma(\C)$. So, as we have said in Remark~\ref{muchos conjuntos info}, we obtain at most as many information sets as there are isomorphisms. 
\end{remark}

\begin{example}
 We apply the results in this section to the code $\C^*$ given in Example~\ref{examplesuitable}. Recall that its defining set is $\D^*=\{1,2,4,8,11,16\}\cup \{3,6,12\}\cup \{7,14\}$ and we were using the isomorphism given by the Chinese Remainder Theorem. We have chosen $\Db=\{1,3,7\}$ as suitable set of representatives and $\U=\{1,3\}$. Then, from (\ref{M(u)}) we have 
  \begin{eqnarray*}
   M(1)&=&\frac{1}{\left|C_3(1)\right|}\left(\left|C_{21}(1)\right|+\left|C_{21}(7)\right|\right)=\frac{1}{2}(6+2)=4,\\
   M(3)&=&\frac{1}{\left|C_3(3)\right|}\cdot \left|C_{21}(3)\right|=\frac{1}{1}\cdot 3=3,
  \end{eqnarray*}
 and so $f_1=4>f_2=3>f_3=0$. On the other hand, $g_1=m(1)=2<g_2=2+m(0)=3$.  These sequences yield the set of check positions for $\C$
 $$\Gamma(\C)=\{(0,0),(0,1),(0,2),(0,3),(1,0),(1,1),(1,2),(1,3),(2,0),(2,1),(2,2)\}.$$
Finally, we have that $T^{-1}(\Gamma(\C))=\{0,1,2,3,7,8,9,10,14,15,16\}$ is a set of check positions for $\C^*$.
 
 Now, let us consider the isomorphism $\widehat T:\Z_{21}\rightarrow \Z_3\times\Z_7$ given by $\widehat T(1)=(1,2)$. Then 
 $${\widehat T}^{-1}(\Gamma(\C))=\{0,1,4,7,8,11,12,14,15,18,19\},$$
which is a different set of check positions for $\C^*$. 
 
\end{example}

\section{Reed-Muller codes}\label{RM}

In this section we shall introduce Reed-Muller codes from the group-algebra point of view. 
Specifically, we are going to present Reed-Muller codes as a type of code contained in the family of so-called affine-invariant extended cyclic codes (see, for instance, \cite{Hb},\cite{Hb2} or \cite{Charpin}).

Recall that $\F$ denotes the binary field. Let $G$ be the additive subgroup of the field of $2^m$ elements. So $G$ is an elementary abelian group of order $\left|G\right|=2^m$ and $G^*=G\setminus\{0\}$ is a cyclic group. From now on we write $n=2^m-1$. We consider the group algebra $\F G$ which will be the ambient space for Reed-Muller codes. We denote the elements in $\F G$ as $\sum_{g\in G} a_g X^g$ and then the operations are written as follows
 \begin{eqnarray*}
  \sum\limits_{g\in G}a_g X^g+\sum\limits_{g\in G}b_g X^g&=&\sum\limits_{g\in G}(a_g+b_g) X^g\\
  c\cdot\sum\limits_{g\in G}a_g X^g&=&\sum\limits_{g\in G}(c\cdot a_g) X^g\ (c\in \F)\\
  \left(\sum\limits_{g\in G}a_g X^g\right)\cdot\left(\sum\limits_{h\in G}b_h X^h\right)&=&\sum\limits_{g\in G}\left(\sum\limits_{g_1+g_2=g} a_{g_1} b_{g_2} \right)X^g.
 \end{eqnarray*}

Notice that $X^0$ is the unit element in $\F G$. The following definitions introduce the family of affine-invariant extended cyclic codes. All throughout this section we fix  $\alpha$, a generator of the cyclic group $G^*$, that is, a primitive $n$-th root of unity. 

\begin{definition}
 Let $\alpha$ be a generator of $G^*$. A code $\C\subseteq \F G$ is an extended cyclic code if for any $\sum\limits_{g\in G}a_g X^g\in \C$ one has that $\sum\limits_{g\in G}a_g X^{\alpha g}\in \C$ and $\sum\limits_{g\in G}a_g=0$.
\end{definition}

\begin{definition}
 We say that an extended cyclic code $\C\subseteq \F G$ is affine-invariant if for any $\sum\limits_{g\in G}a_g X^g\in \C$ one has that $\sum\limits_{g\in G}a_g X^{hg+k}$ belongs to $\C$ for all $h,k\in G, h\neq 0$.
\end{definition}

It is clear that if $\C\subseteq \F G$ is affine-invariant then $\C$ is an ideal in $\F G$ and $\C^*\subseteq \F G^*$, the punctured code at the position $X^0$, is cyclic in the sense that it is the projection to $\F G^*$ of the image of a cyclic code via the map 
 \begin{eqnarray}\label{inmersionciclicos}
  \nonumber \A(n) &\longrightarrow& \F G\\
   \sum\limits_{i=0}^{n-1}a_{i}X^i&\hookrightarrow& \left(-\sum\limits_{i=0}^{n-1} a_i\right)X^0+\sum\limits_{i=0}^{n-1}a_i X^{\alpha^i},
  \end{eqnarray}
where $\alpha$ is the fixed $n$-th root of unity.

Now, for any $s\in\{0,\dots,n=2^m-1\}$ we consider the $\F$-linear map $\phi_s:\F G\rightarrow G$ given by
\begin{equation*}
 \phi_s\left(\sum\limits_{g\in G}a_g X^g\right)=\sum\limits_{g\in G}a_g g^s
\end{equation*}
where we assume $0^0=1\in\F$ by convention.

\begin{definition}
 Let $\C\subseteq\F G$ be an affine-invariant code. The set 
 $$D(\C)=\{i\mid \phi_i(x)=0 \text{ for all } x\in \C\}$$
 is called the defining set of $\C$.
\end{definition}

Note first that since $\C$ is an extended-cyclic code, one has that $0\in D(\C)$ because $\phi_0\left(\sum\limits_{g\in G}a_g X^g\right)=\sum\limits_{g\in G}a_g g^0=\sum\limits_{g\in G}a_g$. Furthermore, it follows from the equality $\phi_{2s}(x)=(\phi_s(x))^2\ (x\in \C)$ that $D(\C)$ is a union of $2$-cyclotomic cosets modulo $n$. On the other hand, keeping in mind the map (\ref{inmersionciclicos}), we talk about the set of zeros an the defining set of $\C^*$ when we are making reference to those of the corresponding cyclic code in $\A(n)$. So, fixing $\alpha$ a $n$-th primitive root of unity, the zeros of the cyclic code $\C^*$ are $\{\alpha^s\mid s\in D(\C), s\neq 0\}$ and the defining set of $\C^*$ (according to Definition~\ref{definingset} and with respect to that root of unity), is $D(\C^*)=D(\C)\setminus\{0\}$.

It is easy to prove that any affine-invariant code is totally determined by its defining set. Conversely, any subset of $\{0,\dots,n\}$ which is a union of $2$-cyclotomic cosets and contains $0$ defines an affine-invariant code in $\F G$.

\begin{remark}
 It may occur that $n,0\in D(\C)$ which could yield confusion considering the $2$-cyclotomic cosets modulo $n$. Those elements are considered distinct and they indicate different properties of the code $\C$; namely, $0$ always belongs to $D(\C)$ because $\C$ is an extended cyclic code, while if $n$ belongs to $D(\C)$ then the cyclic code $\C^*$ is even-like which implies that $\C$ is a trivial extension.
\end{remark}

Finally, to introduce the family of Reed-Muller codes as affine-invariant codes we need to recall the notions of binary expansion and $2$-weight. For any natural number $k$ its binary expansion is the sum $\sum_{r\geq 0} k_r 2^r=k$ with $k_r\in \{0,1\}$. The $2$-weight or simply weight of $k$ is ${\rm wt}(k)=\sum_{r\geq 0}k_r$.

\begin{definition}\label{defRM}
 Let $0< \rho< m$. The Reed-Muller code of order $\rho$ and length $2^m$, denoted by $R(\rho,m)$, is the affine-invariant code in $\F G$ with defining set 
 $$D(R(\rho,m))=\{i\mid 0\leq i<2^m-1 \text{ \rm{and} } {\rm wt}(i)< m-\rho\}.$$
\end{definition}

From the classical point of view, Reed-Muller codes are also defined for the cases $\rho=0,m$, but they correspond with the trivial cases $R(m,m)=\F G$ and $R(0,m)=\langle\sum_{g\in G} g\rangle$ (the repetition code) which do not have interest in the context of this paper. The following result summarizes some well known results about Reed-Muller codes that we will refer to further. The reader may see \cite{Hb2}.

\begin{proposition}\label{basicsRM}
  Let $0< \rho< m$.
 \begin{enumerate}
  \item The Reed-Muller code $R(\rho,m)$ is a code of length $2^m$, dimension   
          $$k= {m\choose 0}+{m\choose 1}+\cdots+{m\choose \rho}$$
         and minimum distance $2^{m-\rho}$. 
  \item $R(m-1,m)=\{\sum\limits_{g\in G}a_g g\mid \sum\limits_{g\in G}a_g =0\}$, that is, the code of all even weight vectors in $\F G$.
  \item $R(\rho,m)^\bot=R(m-\rho-1,m)$.
 \end{enumerate}
\end{proposition}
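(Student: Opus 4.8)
The plan is to handle the three items in increasing order of difficulty, reducing each to the combinatorics of binary weights together with one structural fact about affine-invariant codes. The underlying tool throughout is that an affine-invariant code $\C\subseteq\F G$ satisfies $\dim\C = 2^m-|D(\C)|$: its punctured code $\C^*$ is cyclic of length $n=2^m-1$ with defining set $D(\C)\setminus\{0\}$, so $\dim\C^* = n-(|D(\C)|-1)$, and passing to the parity-check extension preserves dimension, giving $\dim\C = n+1-|D(\C)| = 2^m-|D(\C)|$.

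For item~1, the length is $|G|=2^m$ by construction. For the dimension I would count $|D(R(\rho,m))|$. Since $n=2^m-1$ has weight $m>m-\rho-1$, the restriction $i<2^m-1$ in Definition~\ref{defRM} discards nothing, so $|D(R(\rho,m))| = \sum_{j=0}^{m-\rho-1}\binom{m}{j}$. Subtracting from $2^m=\sum_{j=0}^{m}\binom{m}{j}$ and reindexing via $j\mapsto m-j$ gives $\dim = \sum_{j=m-\rho}^{m}\binom{m}{j} = \sum_{i=0}^{\rho}\binom{m}{i}$, as claimed. The minimum distance $d=2^{m-\rho}$ is the hard part, and I expect it to be the main obstacle, since it is the only genuinely metric (as opposed to set-theoretic) statement. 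The route I would take is to identify the affine-invariant code of Definition~\ref{defRM} with the classical polynomial Reed--Muller code, i.e.\ the $\F$-span of the evaluation vectors of polynomial functions of degree at most $\rho$ on $G=\F_{2^m}$; concretely, $x=\sum_g a_g X^g$ satisfies $\phi_i(x)=\sum_g a_g g^i=0$ for every $i$ with $\mathrm{wt}(i)<m-\rho$ exactly when its interpolating function has reduced degree at most $\rho$, which is the standard dictionary between vanishing power sums and the degree of the interpolant. Once the two descriptions coincide, $d=2^{m-\rho}$ follows from the classical theory, e.g.\ by the recursion $R(\rho,m)=\{(u\mid u+v):u\in R(\rho,m-1),\,v\in R(\rho-1,m-1)\}$ and induction on $m$ (the lower bound from the recursion, the upper bound from a product of $\rho$ independent affine forms whose support has size $2^{m-\rho}$); ultimately I would cite \cite{Hb2}, but this identification is the step carrying all the weight.

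Item~2 is immediate from the same weight count: for $\rho=m-1$ the defining condition reads $\mathrm{wt}(i)<1$, so $D(R(m-1,m))=\{0\}$. A code whose defining set is exactly $\{0\}$ is cut out by the single constraint $\phi_0(x)=\sum_g a_g=0$, which is precisely the even-weight condition; hence $R(m-1,m)=\{\sum_g a_g g : \sum_g a_g=0\}$. As a consistency check this has dimension $2^m-1$, in agreement with item~1.

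For the duality in item~3 I would invoke the description of the dual of an extended cyclic code in terms of defining sets: $\C^\perp$ is again affine-invariant, with $D(\C^\perp)=\{0\}\cup\{\,n-i : 1\le i\le n-1,\ i\notin D(\C)\,\}$ (the extended-cyclic analogue of the complement-and-negate rule for cyclic defining sets, the $0$ reflecting that both codes are extended cyclic). The arithmetic crux is that, since $n=2^m-1$ is the all-ones pattern, $n-i$ is the bitwise complement of $i$, so $\mathrm{wt}(n-i)=m-\mathrm{wt}(i)$ for $1\le i\le n-1$, and $i\mapsto n-i$ is a bijection on $\{1,\dots,n-1\}$. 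Applying this to $\C=R(\rho,m)$: the nonzero indices outside $D(R(\rho,m))$ are those with $\mathrm{wt}(i)\ge m-\rho$, and complementation sends them to indices $j$ with $\mathrm{wt}(j)=m-\mathrm{wt}(i)\le\rho$, i.e.\ $\mathrm{wt}(j)<m-(m-\rho-1)$. Together with $0$ this is exactly $D(R(m-\rho-1,m))$, so $R(\rho,m)^\perp=R(m-\rho-1,m)$. The only points needing care are the boundary indices $0$ and $n$ (the latter lying in no Reed--Muller defining set here, which is why these are genuine, non-trivial extensions) and verifying that the complementation is a bijection on the relevant index set, which it is.
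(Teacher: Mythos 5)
Your proposal is correct, but there is nothing in the paper to compare it against: the paper states Proposition~\ref{basicsRM} as a summary of well-known facts and simply cites \cite{Hb2}, giving no proof. What you have done is supply a proof within the paper's own defining-set formalism, which is a reasonable and largely self-contained route. The dimension count via $\dim\C=2^m-|D(\C)|$ is sound (the extension map of the paper's display (\ref{inmersionciclicos}) is injective, so the parity-check extension indeed preserves dimension, and your observation that $n=2^m-1$ has weight $m$ correctly disposes of the boundary index), and item~2 follows exactly as you say since the code with defining set $\{0\}$ is $\ker\phi_0$. Your treatment of item~3 via the complement-and-negate rule $D(\C^\perp)=\{0\}\cup\{n-i : i\notin D(\C),\ 1\le i\le n-1\}$ is the standard argument and meshes with the paper's own remark distinguishing the indices $0$ and $n$; the rule itself is a cited structural fact (Huffman's chapter or Charpin), which is fair given the paper cites wholesale. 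One point needs sharpening in your item~1: ``reduced degree at most $\rho$'' must mean the $2$-weight (boolean) degree, i.e.\ $\max\{\mathrm{wt}(j): c_j\neq 0\}$ in the univariate representation $f(Y)=\sum_j c_jY^j$ of the interpolant, not the univariate degree. The dictionary is that $\phi_i(x)$ is, up to sign, the coefficient $c_{n-i}$, so $\phi_i(x)=0$ for all $\mathrm{wt}(i)<m-\rho$ says precisely that every surviving exponent $j$ satisfies $\mathrm{wt}(j)=m-\mathrm{wt}(n-j)\le\rho$; the univariate degree of such an $f$ can vastly exceed $\rho$, so the statement as you phrased it would be false under the naive reading. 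With that clarified, the minimum distance via the $(u\mid u+v)$ recursion and the product of $\rho$ independent linear forms is the classical argument, and deferring it to \cite{Hb2} puts you on exactly the same footing as the paper.
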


Since $R(m-1,m)^\bot=R(0,m)$ the problem of searching for information sets has no interest in the case $\rho=m-1$, so in the rest of the paper we will assume that $\rho< m-1$.

As a consequence of Definition~\ref{defRM} we have that the defining set of the punctured code $R^*(\rho,m)$, at the position $X^0$ and with respect to the fixed $n$-th root of unity $\alpha$, is given by the following union of cyclotomic cosets modulo $n$
    $$D(R^*(\rho,m))=\bigcup_{i\in D(R(\rho,m))\setminus\{0\}}C_n(i).$$
    
In the following sections we will deal with the application of the results contained in Section~\ref{cyclics} to the cyclic code $R^*(\rho,m)$ in order to obtain an information set for $R(\rho,m)$. To use a notation congruent with that used in that section we write $\D^*=D(R^*(\rho,m))$. We assume that there exist integers $r_1,r_2$ such that $n=2^m-1=r_1\cdot r_2$ with $r_1, r_2$ odd and $\gcd(r_1,r_2)=1$. Now, we fix some notation and introduce some definitions and important results that will be needed.

\begin{remark}\label{check positions union 0}

Note that for any fixed primitive root of unity $\alpha$, following the notation used to describe the elements in $\F G$, $\left(-\sum\limits_{i=0}^{n-1} a_i\right)X^0+\sum\limits_{i=0}^{n-1}a_i X^{\alpha^i}\in\F G$, a set $\I\subseteq\{0,\alpha^0,\dots,\alpha^{n-1}\}$  is an information set for a code $\C\subseteq\F G$ with dimension $k$, if $\left|\I \right|=k$ and $\C_\I=\F^{\left|\I \right|}$.
 
 On the other hand, since $R^*(\rho,m)$ is contained in $\F G^*$ we have that any information set for it will be a subset of $\{\alpha^0,\dots,\alpha^{n-1}\}$. Obviously, an information set for $R^*(\rho,m)$ is an information set for $R(\rho,m)$ too. However, note that if $\Gamma$ is a set of check positions for $R^*(\rho,m)$ then a set of check positions for $R(\rho,m)$ is $\Gamma\cup\{0\}$.
\end{remark}

Now, we need to fix some notation. For any integer $0<K<m-1$ we define
\begin{eqnarray*}
 \Omega(K)&=&\left\{0< j<  2^m-1\tq {\rm wt}(j)=K\right\}
 =\left\{2^{t_1}+\dots+2^{t_{K}}\tq 0\leq t_1<\dots<t_{K}<m\right\}
\end{eqnarray*}

and hence, $$\D^*=\bigcup_{K=1}^{m-\rho-1}\{0< j< 2^m-1 \tq {\rm wt}(j)=K\}=\bigcup_{K=1}^{m-\rho-1}\Omega(K).$$

\begin{example}
 Take $m=4$. Then $n=2^4-1=15, r_1=3, r_2=5$. For these parameters one has that
   $$\Omega(1)=\{1,2,4,8\}, \Omega(2)=\{3,5,6,9,10,12\} \text{ and } \Omega(3)=\{7,11,13,14\}.$$
 
 The code $R(1,4)$ has defining set $\D(R(1,4))=\{0,1,2,3,4,5,6,8,9,10,12\}$ and so $\D^*=\D(R^*(1,4))=\Omega(1)\cup \Omega(2)=\{1,2,3,4,5,6,8,9,10,12\}$. 
 
 Finally, the code $R(2,4)$ has defining set $\D(R(2,4))=\{0,1,2,4,8\}$, which implies $\D^*=\D(R^*(2,4))=\Omega(1)$.
\end{example}

Now, we take $\Db$ a suitable set of representatives in $\D^*$ and a set $\U$ as in the previous section (see Definition~\ref{suitable} and subsequent paragraphs).

We are interested in handling convenient elements in the fixed suitable set of representatives in order to compute the necessary cyclotomic cosets. We will describe them in Theorem~\ref{teoremaacotados}. First, we need a lemma.

\begin{lemma}\label{desigualdadestope}
 Let $j<i<K<m$ be natural numbers. Then
 \begin{enumerate}
  \item $m-\topeK{im}= \topeK{(K-i)m}$ or $\topeK{(K-i)m}+1$.
  \item $\topeK{im}-\topeK{jm} = \topeK{(i-j)m}$ or $\topeK{(i-j)m}+1$.
 \end{enumerate}
\end{lemma}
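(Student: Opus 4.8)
The plan is to recognize that both statements are particular instances of a single elementary identity about floor functions, so that no arithmetic specific to the quantities $im$, $jm$, $(K-i)m$ or $(i-j)m$ is actually needed. Concretely, I would first isolate the following fact: for any real numbers $x,y$ one has
\begin{equation*}
\lfloor x\rfloor-\lfloor y\rfloor \in \left\{\lfloor x-y\rfloor,\ \lfloor x-y\rfloor+1\right\}.
\end{equation*}
Both parts of the lemma then follow by choosing $x$ and $y$ appropriately and reading this membership as the ``$=$ or $+1$'' alternative in the statement.

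To prove the auxiliary identity I would write $x=\lfloor x\rfloor+\{x\}$ and $y=\lfloor y\rfloor+\{y\}$ with fractional parts $\{x\},\{y\}\in[0,1)$. Then $x-y=(\lfloor x\rfloor-\lfloor y\rfloor)+(\{x\}-\{y\})$, and since $\lfloor x\rfloor-\lfloor y\rfloor$ is an integer while $\{x\}-\{y\}\in(-1,1)$, taking floors gives $\lfloor x-y\rfloor=(\lfloor x\rfloor-\lfloor y\rfloor)+\lfloor\{x\}-\{y\}\rfloor$ with $\lfloor\{x\}-\{y\}\rfloor\in\{-1,0\}$. Rearranging yields the claimed two-valued membership, the value $+1$ occurring exactly when $\{y\}>\{x\}$ (that is, when subtracting the fractional parts forces a borrow).

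For part (2) I would apply this with $x=im/K$ and $y=jm/K$, so that $x-y=(i-j)m/K$ and the identity reads $\topeK{im}-\topeK{jm}\in\{\topeK{(i-j)m},\topeK{(i-j)m}+1\}$, which is precisely the assertion. For part (1) the key observation is that $m$ is an integer, hence $m=\topeK{Km}$; applying the auxiliary identity with $x=Km/K=m$ and $y=im/K$ gives $x-y=(K-i)m/K$ and therefore $m-\topeK{im}\in\{\topeK{(K-i)m},\topeK{(K-i)m}+1\}$, as required. Thus part (1) is the same identity with the pair $(i,j)$ replaced by $(K,i)$, using $\topeK{Km}=m$.

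The argument is almost entirely bookkeeping, so there is no substantial obstacle; the only points requiring care are the correct handling of the fractional parts (in particular keeping $\{x\}-\{y\}$ strictly inside $(-1,1)$ so that its floor is $-1$ or $0$, never $-2$) and the observation that the hypotheses $j<i<K<m$ are used only to guarantee that the arguments $im$, $(K-i)m$, $(i-j)m$ are positive and in the intended range for the subsequent applications of the lemma, the floor identities themselves being valid for arbitrary reals. If one prefers to avoid fractional parts altogether, an equivalent route is a short case analysis according to whether $K$ divides $im$ (respectively $jm$), which directly separates the two cases $+0$ and $+1$.
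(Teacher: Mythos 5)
Your proof is correct and is essentially the same argument as the paper's: the paper also writes $\topeK{im}=\frac{im}{K}-\delta$ and $\topeK{jm}=\frac{jm}{K}-\delta'$ with $\delta,\delta'\in[0,1)$ and splits according to whether $\delta=0$ (part (1)) or the sign of $\delta-\delta'$ (part (2)), which is exactly your fractional-part computation with $\{x\}=\delta$ and $\{y\}=\delta'$. The only difference is organizational: you isolate the general identity $\lfloor x\rfloor-\lfloor y\rfloor\in\{\lfloor x-y\rfloor,\lfloor x-y\rfloor+1\}$ once and obtain part (1) as the instance $x=m=\topeK{Km}$, whereas the paper carries out the same computation separately for each part.
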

\begin{proof}
 Let $\topeK{im}=\frac{im}{K}-\delta$ with $0\leq \delta <1$. Then 
 $$m-\topeK{im}=m-\frac{im}{K}+\delta=\frac{(K-i)m}{K}+\delta.$$
So, if $\delta=0$ then $\topeK{(K-i)m}=m-\topeK{im}$, and if $\delta>0$ then $\topeK{(K-i)m}=m-\topeK{im}-1$. This proves {\textit 1)}.

Now, let $\topeK{jm}=\frac{jm}{K}-\delta'$ with $0\leq \delta'<1$. Then 
$$\topeK{im}-\topeK{jm}=\frac{(i-j)m}{K}+\delta'-\delta.$$
Note that $-1<\delta'-\delta<1$. So, if $\delta\geq\delta'$ then $\topeK{(i-j)m}=\topeK{im}-\topeK{jm}$; otherwise, $\topeK{(i-j)m}=\topeK{im}-\topeK{im}-1$. This finishes the proof.
\end{proof}

\vspace{.5cm}

 It is clear that if $e=2^{s}\in \Omega(1)$, with $0\leq s <m$, then $C_n(e)=C_n(1)$. The following theorem establishes a more general result for $\Omega(K)$ with $1<K<m-1$.

\begin{theorem}\label{teoremaacotados}
Let $e=2^{s_0}+2^{s_1}+\dots+2^{s_{K-1}}\in \Omega(K)$ with $0\leq s_0<\dots <s_{K-1}<m$ and $1<K<m-1$. Then there exists  
\begin{equation}\label{acotados} e'= 1+2^{t_1}+\dots +2^{t_{K-1}} ,\end{equation}
with $t_i\leq \topeK{im}$ for all $i=1,\dots,K-1$, such that $C_n(e)=C_n(e')$.
\end{theorem}

\begin{proof}
We can assume w.l.o.g. that $e=1+2^{s_1}+2^{s_2}+\dots+2^{s_{K-1}}\in \Omega(K)$ with $0< s_1<\dots <s_{K-1}<m$ and $1<K<m-1$. If $e$ verifies the required conditions we are done, so suppose that $e$ do not satisfies them and let $\delta_1, \delta_2$ be the integers such that 
  $$\begin{array}{ccl}  
  s_i\leq \topeK{im} &\text{ for all } &i=1,\dots,\delta_1-1,\\
  s_{\delta_1}>\topeK{\delta_1m},&&\\
  s_{K-i}\leq\topeK{(K-i)m} &\text{ for all }& i=1,\dots,\delta_2;   
  \end{array}$$
and $K>\delta_1+\delta_2$, that is, the binary expansion of $e$ has at least $\delta_1+\delta_2-1$ exponents satisfying the desired condition. Observe that $\delta_1\geq 1, \delta_2\geq 0$.
  
We define $u=m-s_{\delta_1}$ and consider $e'=2^ue$ modulo $n$. Then, it is easy to see that  $e'=1+2^{s'_1}+\dots+2^{s'_{K-1}}$, with $0<s'_1<\dots <s'_{K-1}<m$, where
 \begin{equation}\label{sprima1}  s'_j\equiv s_{\delta_1+j}+u\  (\text{ mod } m)\ \text{ if }\ j=1,\dots, K-\delta_1-1\end{equation}   
\begin{equation*}\label{sprima2}  s'_{K-\delta_1}=u \end{equation*}
 \begin{equation}\label{sprima3} s'_j=s_{\delta_1+j-K}+u\  \text { if }\ j=K-\delta_1+1,\dots,K-1.\end{equation}

We claim that $s'_j\leq\topeK{jm}$ for any $j=K-(\delta_1+\delta_2),\dots,K-1$. Let us define $A=\{K-\delta_1+1,K-\delta_1+2,\dots,K-1\}$ and $B=\{K-(\delta_1+\delta_2),K-(\delta_1+\delta_2)-1,\dots,K-\delta_1-1\}$, then $A\cup B\cup\{K-\delta_1\}=\{K-(\delta_1+\delta_2),\dots,K-1\}$. Note that in the cases $\delta_1=1$, $\delta_2=0$ one has that $A=\emptyset$, $B=\emptyset$ respectively.

On  the one hand, by (\ref{sprima3}), one has that $s'_j=s_{\delta_1+j-K}+u$ for any $j\in A$. Let us observe that $s_{\delta_1}-s_{\delta_1-i}>\topeK{\delta_1 m }-\topeK{(\delta_1-i)m}\geq \topeK{im}$ for any $i=1\dots,\delta_1-1$ (see Lemma~\ref{desigualdadestope}). Then, $$s'_j=m-s_{\delta_1}+s_{\delta_1-(K-j)}=m-(s_{\delta_1}-s_{\delta_1-(K-j)})<m-\topeK{(K-j)m}\leq \topeK{jm}+1$$
(see Lemma~\ref{desigualdadestope}), and so $s'_j\leq \topeK{jm}$ for any $j\in A$. Furthermore, 
$$s'_{K-\delta_1}=u=m-s_{\delta_1}<m-\topeK{\delta_1 m}\leq\topeK{(K-\delta_1)m}+1,$$
which implies $s'_{K-\delta_!}\leq\topeK{(K-\delta_1)m}$.

On the other hand, by (\ref{sprima1}), for any $j\in B$ we have that $s'_j\equiv s_{\delta_1+j}+ u =m -s_{\delta_1}+s_{\delta_1+j}=m-(s_{\delta_1}-s_{\delta_1+j})$ mod $m$, so $s'_j=s_{\delta_1+j}-s_{\delta_1}$ (note that $s_{\delta_1+j}>s_{\delta_1}$). This leads us to 
$$s'_j=s_{\delta_1+j}-s_{\delta_1}<\topeK{(\delta_1+j)m}-\topeK{\delta_1m}\leq \topeK{jm}+1,$$
and then $s'_j\leq\topeK{jm}$, so we are done.

Therefore, $e'=2^u e$ satisfies that its binary expansion has at least $\delta_1+\delta_2$ exponents verifying the desired condition, one more than in the case of $e$. If we repeat the argument successively by replacing $e$ by $e'$ then we get what we wanted. This finishes the proof.
\end{proof}

The previous result shows that for any element  $e\in \Omega(K), 1 < K < m-1,$ there exists another element $e'\in C_n(e)$ satisfying condition (\ref{acotados}). As we will see in the next section, in the case $K=2$, this fact becomes essential for our purposes. In addition, let us observe that the elements in a given suitable set of representatives might not satisfy condition (\ref{acotados}); we include the next example to show that case.

\begin{example}\label{suitablenot10}

Let us consider the Reed-Muller code $R(3,6)$. So $m=6, n=2^6-1=63, r_1=7, r_2=9$. For these parameters we have $$\Omega(1)=C_{63}(1)=\{1,2,4,8,16,32\}$$ 
and $$\Omega(2)=C_{63}(3)\cup C_{63}(5)\cup C_{63}(9)=\{3,5,6,9,10,12,17,18,20,24,33,34,36,40,48\}.$$
The defining set of $R^*(3,6)$ is $\D^*=\Omega(1)\cup\Omega(2)$. The set $\Db=\{1,3,10,36\}$ is a suitable set of representatives according to Definition~\ref{suitable}; however while $1$ and $3$  satisfy condition (\ref{acotados}) the elements, $10=2+2^3$ and $36=2^2+2^5$ do not satisfy it. The reader may check that there does not exist any suitable set of representatives such that all its elements satisfy condition (\ref{acotados}).
  
\end{example}

%


\section{Information sets for first-order Reed-Muller codes}\label{first-order}

In this section we are applying the results of Section~\ref{cyclics} to the punctured codes of first-order Reed-Muller codes. Specifically, we shall construct an information set for the punctured code $R^*(1,m)$ by using those techniques and then we shall give an information set for the Reed-Muller code $R(1,m)$. To use the mentioned results we need to assume that there exist odd integers $r_1,r_2$ such that $n=2^m-1=r_1\cdot r_2$ and $\gcd(r_1,r_2)=1, r_1,r_2>1$.

All throughout this section we fix a primitive $n$-th root of unity $\alpha$ and $T:\Z_n\rightarrow \Z_{r_1}\times\Z_{r_2}$ an arbitrary isomorphism of groups.

 We have two different possibilities to get an information set for the code $R(1,m)$; namely, from an information set of $R^*(1,m)$, which comes from the defining set of $R^*(1,m)$, and from a check of set positions of $R(1,m)^\bot=R(m-2,m)$, which depends on the defining set of $R^*(m-2,m)$. In general it is more convenient to develop the results in terms of $R^*(m-2,m)$ whose defining set is much smaller than that of $R^*(1,m)$.
 
 Let us denote $\C^*=R^*(m-2,m)$. Then, following the notation fixed in the previous section, we have that 
     $$\D^*=\Omega(1)$$
where $\Omega(1)=\{2^t \tq 0\leq t<m\}$. Observe that $\Omega(1)=C_n(1)$ so we take $\Db=\{1\}$ as our suitable set of representatives (see Definition~\ref{suitable}). Then $\U=\{1\}$. 

 The following theorem gives us the description of an information set for the code $R(1,m)$. We denote by ${\rm Ord}_{r_1}(2)$ the order of $2$ modulo $r_1$, that is, the smallest integer such that $2^{{\rm Ord}_{r_1}(2)}\equiv 1$ modulo $r_1$.
 
\begin{theorem}\label{teoremainfosetprimerorden}
  Suppose that $n=2^m-1=r_1\cdot r_2$, where $r_1,r_2$ are odd integers such that $\gcd(r_1,r_2)=1, r_1,r_2>1$. Let $\C^*=R^*(m-2,m)$ and $a={\rm Ord}_{r_1}(2)$. Let $\D^*\subseteq \Z_n$ be the defining set of $\C^*$ with respect to $\alpha$, a primitive $n$-th root of unity, and let $\C\subseteq \A(r_1,r_2)$ be the abelian code with defining set $D(\C)=T(\D^*)$. Then, the set $T^{-1}\left(\Gamma\right)$ where 
 $$\Gamma=\Gamma(\C)=\left\{(i_1,i_2)\in\Z_{r_1}\times\Z_{r_2} \tq 0\leq i_1< a, 0\leq i_2<\frac{m}{a}\right\}$$
 is a set of check positions for $R^*(m-2,m)$. Furthermore, $\{0,\alpha^i\mid i\in T^{-1}\left(\Gamma\right)\}$ is an information set for $R(1,m)$ and $\{\alpha^i\mid i\notin T^{-1}\left(\Gamma\right)\}$ is an information set for $R(m-2,m)$.
\end{theorem}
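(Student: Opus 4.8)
The plan is to push the singleton structure of the defining set through the reformulated construction of Section~\ref{cyclics} to obtain $\Gamma(\C)$ in closed form, and then to translate the resulting check positions across puncturing and duality. The computation rests on two elementary facts that I would record first. Since $n=2^m-1$ we have $2^m\equiv 1\pmod n$, while $2^k\not\equiv 1\pmod n$ for $0<k<m$ (as then $1<2^k<n$), so the order of $2$ modulo $n$ is exactly $m$, i.e.\ $|C_n(1)|=m$; and by definition $|C_{r_1}(1)|={\rm Ord}_{r_1}(2)=a$. Moreover $r_1\mid 2^m-1$ forces $2^m\equiv 1\pmod{r_1}$, hence $a\mid m$ and $m/a$ is an integer, which is what makes the bounds appearing in $\Gamma$ meaningful.

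With these in hand I would run the construction. As already noted, $\D^*=\Omega(1)=C_n(1)$ is a single $2$-cyclotomic coset, so $\Db=\{1\}$, $\U=\{1\}$ and $\Or(1)=\{1\}$. Substituting into the description (\ref{M(u)}) gives the unique value $M(1)=\frac{1}{|C_{r_1}(1)|}|C_n(1)|=m/a$. Because $\U$ is a singleton, the sequence (\ref{sequence f[i]}) degenerates to $s=1$, $f_1=m/a$ and $f_2=0$, while (\ref{lasg}) gives the single term $g_1=|C_{r_1}(1)|=a$. Feeding $s=1$, $f_1=m/a$, $f_2=0$ and $g_1=a$ into (\ref{checkpositions}) produces exactly
$$\Gamma(\C)=\{(i_1,i_2)\in\Z_{r_1}\times\Z_{r_2}\mid 0\le i_2<m/a,\ 0\le i_1<a\},$$
the set in the statement. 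By Theorem~\ref{teorema principal check positions} this is a set of check positions for $\C$, and transporting via the coordinate relabeling $T$, the set $T^{-1}(\Gamma)$ is a set of check positions for $\C^*=R^*(m-2,m)$.

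It remains to translate this to the two codes of interest. For $R(m-2,m)$ I would use that a set of check positions is the complement of an information set: thus $\{\alpha^i\mid i\notin T^{-1}(\Gamma)\}$ is an information set for $R^*(m-2,m)$, and hence for $R(m-2,m)$ by Remark~\ref{check positions union 0}. For $R(1,m)$ I would invoke $R(1,m)^\bot=R(m-2,m)$ from Proposition~\ref{basicsRM}, together with the duality between information sets and check positions. By Remark~\ref{check positions union 0}, passing from the punctured to the full code, $T^{-1}(\Gamma)\cup\{0\}$ is a set of check positions for $R(m-2,m)$; consequently it is an information set for $R(m-2,m)^\bot=R(1,m)$, which is precisely $\{0,\alpha^i\mid i\in T^{-1}(\Gamma)\}$.

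The computation is routine once the singleton structure is recognized, so I expect no difficulty there; the delicate step is the last paragraph, where three distinct operations --- puncturing at $X^0$, dualizing, and complementing --- must be composed in the right order. As a sanity check I would confirm the sizes: $|\Gamma|=a\cdot(m/a)=m$, so $|T^{-1}(\Gamma)\cup\{0\}|=m+1=\dim R(1,m)$, and the complementary information set for $R(m-2,m)$ has size $2^m-m-1=\dim R(m-2,m)$, as required.
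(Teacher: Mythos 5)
Your proposal is correct and follows essentially the same route as the paper's own proof: the singleton structure $\Db=\U=\Or(1)=\{1\}$ yields $M(1)=m/a$, hence $f_1=m/a>f_2=0$ and $g_1=a$, and the translation to $R(1,m)$ and $R(m-2,m)$ via Theorem~\ref{teorema principal check positions}, Remark~\ref{check positions union 0} and the duality $R(m-2,m)=R(1,m)^\bot$ is exactly the paper's argument. Your preliminary observations (that $|C_n(1)|=m$, $a\mid m$) and the final dimension count are harmless additions the paper states more briefly.
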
 
\begin{proof}
 By definition $\D^*=\Omega(1)$. We take $\Db=\U=\{1\}$. To construct the set $\Gamma(\C)$ given in (\ref{checkpositions}) we need to compute the sequences (\ref{sequence f[i]}) and (\ref{sequence gi}). By the results in Section~\ref{cyclics} we have that those sequences are obtained from (\ref{M(u)}).
 
 In this case, since $\Db=\U=\Or(1)=\{1\}$, we have a unique value
  $$M(1)=\frac{1}{\left|C_{r_1}(1)\right|}\cdot \left|C_n(1)\right|=\frac{m}{a}.$$
 Note that $C_n(1)=\{1,2,\dots,2^{m-1}\}$ and since $r_1$ divides $n$ then $a$ divides $m$.  Then, the sequences (\ref{sequence f[i]}) and (\ref{sequence gi}) are 
 $$f_1=\frac{m}{a}>f_2=0  \text{ and } g_1=m(1)=a.$$
 
 Therefore $\Gamma=\Gamma(\C)=\left\{(i_1,i_2)\in\Z_{r_1}\times\Z_{r_2} \text{ such that } f_2\leq i_2< f_1, 0\leq i_1<g_1\right\}$. 
 
 Finally, by Theorem~\ref{teorema principal check positions}, $\Gamma$ is a set of check positions for $\C$, and then $T^{-1}\left(\Gamma\right)$ is a set of check positions for $R^*(m-2,m)$. So, $\{0,\alpha^i\mid i\in T^{-1}\left(\Gamma)\right)\}$ is a set of check positions for $R(m-2,m)$ (see Remark~\ref{check positions union 0}). Since $R(m-2,m)=R(1,m)^\bot$ we are done.
\end{proof}

\begin{examples}
 The first value for $m$ that satisfies the required conditions is $m=4$. In this case, $n= 2^4-1=15, r_1=3, r_2=5, {\rm Ord}_3(2)=2$. So, let us give an information set for $R(1,4)$. By Theorem~\ref{teoremainfosetprimerorden} we have that $$\Gamma=\left\{(i_1,i_2)\in\Z_{r_1}\times\Z_{r_2} \tq 0\leq i_1< 2, 0\leq i_2< 2\right\}=\{(0,0),(1,0),(0,1),(1,1)\}.$$
 Then, by taking the isomorphism given by the Chinese Remainder Theorem, we have that $T^{-1}(\Gamma)=\{0,1,6,10\}$ and then $\{0,1,\alpha,\alpha^6,\alpha^{10}\}$ is an information set for $R(1,4)$. Moreover $\{\alpha^i\mid i\neq 0,1,6,10\}$ is an information set for $R(2,4)$. Table I shows the different information sets that we can get by using all the possible isomorphisms from $\Z_{15}$ to $\Z_{3}\times \Z_{5}$; the first column includes the image of $1\in \Z_{15}$ which determines the corresponding isomorphism, while the second column gives the set of exponents $I$ such that $\{0,\alpha^i\mid i\in I\}$ is an information set for $R(1,4)$.

\begin{table}[h]
\begin{center}
\begin{tabular}{|c|c|}
 \hline
 $\rm T(1)$&$\rm I$\\ \hline
 (1,1)&\{0,1,6,10\}\\ \hline
 (2,1)&\{0,3,5,8\}\\ \hline
 (1,2)&\{0,3,10,13\}\\ \hline
 (2,2)&\{0,6,8,10\}\\ \hline
 (1,3)&\{0,7,10,12\}\\ \hline
 (2,3)&\{0,2,5,12\}\\ \hline
 (1,4)&\{0,4,9,10\}\\ \hline
 (2,4)&\{0,5,9,14\}\\ \hline
\end{tabular}
\end{center}
\begin{center}
\caption{Information sets for R(1,4)}
\end{center}

\end{table}

 The next value for $m$ is $m=6$. In this case, $n=2^6-1=63, r_1=7, r_2=9, a=3$. So $$\Gamma=\left\{(i_1,i_2)\in\Z_{r_1}\times\Z_{r_2} \text{ such that } 0\leq i_1< 3, 0\leq i_2< 2\right\}=\{(0,0),(1,0),(2,0),(0,1),(1,1),(2,1)\}.$$
 Since $T^{-1}(\Gamma)=\{0,1,9,28,36,37\}$ one has that $\{0,1,\alpha,\alpha^9,\alpha^{28},\alpha^{36},\alpha^{37}\}$ is an information set for $R(1,6)$ and $\{\alpha^i\mid i\neq 0,1,9,28,36,37\}$ is an information set for $R(4,6)$. Again, we have taken the isomorphism given by the Chinese Remainder Theorem
 
 Finally, let us see the case $m=8$, that is, the Reed-Muller codes of length $256$. This is an interesting case because we have three possible decompositions of $n=2^8-1=255$, namely, $(r_1=3, r_2=85)$, $(r_1=5,r_2=51)$ and $(r_1=15, r_2=17)$. Table II shows the sets $T^{-1}(\Gamma)$ obtained for each decomposition; in all cases we are considering the isomorphism given by the Chinese Remainder Theorem.

\begin{table}[h]
\begin{center}
\begin{tabular}{|c|c|c|c|}
 \hline
 $r_1$&$r_2$&a& $T^{-1}(\Gamma)$\\ \hline
 3&85&2&\{0,1,3,85,87,88,171,172\}\\ \hline
 5&51&4&\{0,1,51,52,102,103,153,205\}\\ \hline
 15&17&4&\{0,1,17,18,120,136,137,153\}\\ \hline
\end{tabular}
\end{center}
\begin{center}
\caption{Information sets for R(6,8)}
\end{center}

\end{table}

\end{examples}
 
 To finish this section, we include Table III which shows the suitable values of $m$ up to length 2048. The values $m=2,3,5,7$ yield a prime number for $n=2^m-1$.

\begin{table}[h]
\begin{center}
\begin{tabular}{|c|c|c|c|c|}
 \hline
 $\mathbf{m}$&$\mathbf{n}$&$\mathbf{r_1}$&$\mathbf{r_2}$& $\mathbf{a}$\\ \hline
 4&15&3&5&2\\ \hline
 6&63&7&9&3\\ \hline
 8&255&3&85&2\\ \hline
 8&255&15&17&4\\ \hline
 8&255&5&51&4\\ \hline
 9&511&7&73&3\\ \hline
 10&1023&3&341&2\\ \hline
 10&1023&11&93&10\\ \hline
 10&1023&31&33&5\\ \hline
 11&2047&23&89&11\\ \hline
\end{tabular}
\end{center}
\begin{center}
\caption{Parameters for first order RM codes up to length 2048}
\end{center}

\end{table}

\section{Information sets for second-order Reed-Muller codes}\label{second-order}

 In this section we deal with second-order Reed-Muller codes. As in the previous section, we shall apply the results of Section~\ref{cyclics}, so  we need to assume that there exist integers $r_1,r_2$ such that $n=2^m-1=r_1\cdot r_2$ and $\gcd(r_1,r_2)=1, r_1,r_2>1$. Once more, we fix a primitive $n$-th root of unity $\alpha$ and $T:\Z_n\rightarrow \Z_{r_1}\times\Z_{r_2}$ an arbitrary isomorphism of groups.

 In a similar way to the case of $R(1,m)$, we have two possibilities to get an information set for the code $R^*(2,m)$. Again, we prefer to develop the results in terms of $R^*(m-3,m)$ which yields an information set for $R(m-3,m)=R(2,m)^\bot$.
 
 Throughout this section we denote $\C^*=R^*(m-3,m)$. In this case we have that 
     $$\D^*=\Omega(1)\cup \Omega(2)$$
where $\Omega(1)=\{2^t \tq 0\leq t<m\}$ and $\Omega(2)=\{2^{t_1}+2^{t_2}\tq 0\leq t_1< t_2<m\}$.  Let $\Db\subseteq\D^*$ be a suitable set of representatives  and take $\U\subseteq\Db$ a complete set of representatives of the equivalence classes modulo $r_1$ (see (\ref{relacionU})). We will always assume that $1\in \U\subseteq \Db$ (recall that $\Omega(1)=C_n(1)$).

To get the expressions (\ref{M(u)}) we need to compute the cardinalities $\left|C_n(e)\right|$, for all $e\in\Db$, and $ \left|C_{r_1}(u)\right|$ for any element $u$ in the fixed set $\U$. As we have seen in the previous section, these computations can be made directly in the case of elements in $\Omega(1)$, however, they turn to be much more complicated when we work with the set $\Omega(2)$. So, we impose some conditions on $r_1$ in order to make the mentioned computations possible, namely, all throughout we also assume that $r_1=2^a-1$, with $a$ an integer. Then, we can sum up the restrictions on the paremeters as follows
\begin{equation}\label{restrictions}
 n=2^m-1=r_1\cdot r_2, \text{ where } r_1=2^a-1 \text{ and } \gcd(r_1,r_2)=1, r_1,r_2>1.
\end{equation} 

Note that from these conditions it follows that $a={\rm Ord}_{r_1}(2)$ and so the notation is consistent with that used in Theorem~\ref{teoremainfosetprimerorden}. In what follows we write $m=ab$.



The first lemma, probably a well-known result, shows that the value of the cardinalities $\left|C_n(e)\right|$ and $\left|C_{r_1}(e)\right|$ can be easily computed for the elements in $\Omega(2)$ of the form $1+2^t$. 

\begin{lemma}\label{clases-n}
 Let $\mu,\nu$ integers such that $\mu=2^\nu-1$. Then $\left|C_\mu\left(1\right)\right|=\nu$ and for any natural number $0<t\leq \nu-1$
  $$\left|C_\mu\left(1+2^t\right)\right|=
  \begin{cases}
    \displaystyle\frac{\nu}{2} &\text{ in case }\quad \displaystyle t = \frac{\nu}{2}\\
    &\\
    \nu &\text{otherwise}
  \end{cases}$$
\end{lemma}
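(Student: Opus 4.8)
The plan is to analyze the $2$-cyclotomic coset $C_\mu(1+2^t)$ where $\mu = 2^\nu-1$, and to exploit the fact that multiplication by $2$ modulo $\mu = 2^\nu-1$ acts as a cyclic shift on the $\nu$-bit binary representations of residues. The key observation is that since $2^\nu \equiv 1 \pmod{\mu}$, every element of $\Z_\mu$ can be written with a binary expansion of length $\nu$ (cyclically), and multiplying by $2$ rotates this $\nu$-bit string by one position. The size $|C_\mu(x)|$ is therefore the period of the cyclic string associated to $x$, which always divides $\nu$.

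First I would handle $|C_\mu(1)| = \nu$: the element $1$ has binary string $00\cdots01$ of length $\nu$ with a single $1$, and its rotations are all distinct (the $1$ occupies each of the $\nu$ positions), giving a full period $\nu$. Next, for $1+2^t$ with $0 < t \le \nu-1$, the associated $\nu$-bit string has exactly two ones, in positions $0$ and $t$. The cyclic period of a two-one string is determined by the gap structure: under rotation, the period equals $\nu$ divided by the number of rotational symmetries. A two-one cyclic string of length $\nu$ with the ones separated by gaps $t$ and $\nu - t$ is invariant under a nontrivial rotation exactly when the two ones are antipodal, i.e. when the rotation by $\nu/2$ swaps them, which requires $t = \nu/2$ (so $\nu$ even). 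I would make this precise by arguing that the period $p$ must divide $\nu$, that $p$ divides $\gcd$ of the inter-one distances, and that the only way to get $p < \nu$ with just two ones is the half-rotation symmetry, forcing $p = \nu/2$ and $t = \nu/2$.

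Concretely, I would show that $2^p(1+2^t) \equiv 1 + 2^t \pmod{\mu}$ forces $2^p \equiv 2^p \cdot 2^t \cdot$ (appropriate shift) to match the original positions; translating back to the bit strings, $2^p$ cyclically rotates the pair of ones by $p$ positions, and this fixes the set $\{0, t\}$ (mod $\nu$) if and only if either $p \equiv 0 \pmod \nu$ (the trivial case giving period $\nu$) or the rotation swaps $0 \leftrightarrow t$, i.e. $p \equiv t$ and $p \equiv -t \pmod \nu$, which gives $2t \equiv 0 \pmod \nu$ and $t \equiv p \pmod\nu$ with $0 < p < \nu$. The smallest such $p$ is $p = \nu/2$ when $t = \nu/2$. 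Thus $|C_\mu(1+2^t)| = \nu/2$ precisely when $t = \nu/2$, and equals $\nu$ otherwise.

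The main obstacle I anticipate is making the bit-rotation argument rigorous in the modular-arithmetic language without hand-waving, in particular arguing carefully that no period strictly smaller than $\nu/2$ can occur (a two-one string cannot have three or more rotational symmetries unless the ones coincide) and that when $t \ne \nu/2$ no nontrivial symmetry exists at all. I would dispatch this by noting that a rotation fixing the two-element set $\{0,t\} \pmod\nu$ either fixes each element (forcing period $\nu$) or swaps them (forcing $2t\equiv 0$ and the swap-amount to be the period), leaving no other possibilities; this case distinction yields exactly the two stated values and completes the proof.
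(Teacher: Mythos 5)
Your proposal is correct and takes essentially the same route as the paper: the paper sets $\delta=\left|C_\mu(1+2^t)\right|$, writes $2^\delta(1+2^t)=2^\delta+2^{\delta+t}\equiv 1+2^t \pmod{\mu}$, and concludes that either $\delta=\nu$ or ($\delta=t$ and $\delta+t=\nu$, forcing $\delta=t=\nu/2$), which is exactly your identity-versus-swap dichotomy for a rotation fixing the two-element exponent set $\{0,t\}$. Your bit-string formulation merely makes explicit the uniqueness of $\nu$-bit cyclic representations modulo $2^\nu-1$ that the paper's exponent-matching step uses implicitly.
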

\begin{proof}
The equality $\left|C_\mu\left(1\right)\right|=\nu$ is clear. To see the second one, denote $\delta=\left|C_\mu\left(1+2^t\right)\right|$. Then $2^\delta\cdot (1+2^t)=2^\delta+2^{\delta+t}\equiv 1+2^t$ modulo $\mu$. So, either $\delta=\nu$ (and $\delta+t\equiv t$ mod $\nu$) or $\delta+t=\nu$ and $\delta=t$. The second condition implies $\delta=t=\nu/2$. 
\end{proof}

As we have observed in the previous section the elements in the suitable set $\Db$ might not satisfy condition (\ref{acotados}). However, we can relate any element in $\Db$ with another one in the same $2$-cyclotomic coset modulo $n$ that satisfies that condition (see Theorem~\ref{teoremaacotados}). The following result details this relationship.

\begin{proposition}\label{epsilon}
 Let $\Db$ be a suitable set of representatives. There exists a bijection  $\varepsilon:\Db\setminus\{1\}\longrightarrow \{s\in \Z\mid 1\leq s\leq \left\lfloor \frac{m}{2}\right\rfloor\}$ where, for any $e\in\Db\setminus\{1\}$,  $\varepsilon(e)$ is the unique integer such that $1+2^{\varepsilon(e)}$ belongs to $C_n(e)$ and satisfies condition (\ref{acotados}).
\end{proposition}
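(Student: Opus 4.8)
The plan is to establish that $\varepsilon$ is a well-defined bijection by producing, for each $e\in\Db\setminus\{1\}$, a \emph{unique} integer $s$ with $1\le s\le\lfloor m/2\rfloor$ such that $1+2^{s}\in C_n(e)$. First I would note that every $e\in\Db\setminus\{1\}$ lies in $\Omega(2)$, since $\Omega(1)=C_n(1)$ and $1$ represents that coset. By Theorem~\ref{teoremaacotados} applied with $K=2$, every element of $\Omega(2)$ has a representative in its $2$-cyclotomic coset of the form $1+2^{t}$ with $t\le\lfloor m/2\rfloor$; this already gives the existence of a candidate value in the target range. So the first step is existence: define $\varepsilon(e)$ to be such a $t$.

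\textbf{Well-definedness and injectivity.} The crux is to show this $t$ is \emph{unique} within $\{1,\dots,\lfloor m/2\rfloor\}$, which simultaneously makes $\varepsilon$ well-defined as a function and injective. The key observation is that $C_n(1+2^{t})=\{2^{i}+2^{i+t}\pmod n\mid i\in\N\}$, so asking which elements of the form $1+2^{s}$ (that is, $s_0=0$) lie in this coset amounts to asking for which shifts $i$ one recovers an exponent pattern $\{0,s\}$. Working modulo $m$ on the exponents (since $2^{m}\equiv 1$), a cyclic shift of the pair $\{0,t\}$ produces $\{i,\,i+t \bmod m\}$, and this equals a pair containing $0$ precisely when $i\equiv 0$ (giving $\{0,t\}$) or $i\equiv -t$ (giving $\{0,\,m-t\}$). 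Hence the only representatives of the shape $1+2^{s}$ have $s\in\{t,\,m-t\}$, and exactly one of these two lies in $\{1,\dots,\lfloor m/2\rfloor\}$ unless $t=m-t$, i.e. $t=m/2$, in which case the two coincide. This pins down $\varepsilon(e)$ uniquely and shows distinct cosets cannot share a value, giving injectivity: if $\varepsilon(e)=\varepsilon(e')=s$ then both $e,e'$ represent the same coset $C_n(1+2^{s})$, forcing $e=e'$ since $\Db$ contains one representative per coset.

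\textbf{Surjectivity.} For surjectivity I would argue that each $s$ with $1\le s\le\lfloor m/2\rfloor$ is hit. Given such an $s$, the element $1+2^{s}$ has weight $2$, so $1+2^{s}\in\Omega(2)\subseteq\D^*$, and therefore its $2$-cyclotomic coset $C_n(1+2^{s})$ is represented by some (unique) $e\in\Db$. By the uniqueness analysis of the previous step, the representative of $C_n(e)$ of the form $1+2^{t}$ with $t\le\lfloor m/2\rfloor$ forces $t=s$, so $\varepsilon(e)=s$. Since domain and codomain are finite, injectivity plus surjectivity yields the bijection; one may alternatively finish by a cardinality count once injectivity is known.

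\textbf{Main obstacle.} I expect the genuine difficulty to be the uniqueness argument, and specifically the careful bookkeeping of exponents modulo $m$ versus the integers modulo $n$. The statement $2^{i}(1+2^{t})\equiv 1+2^{s}\pmod n$ must be translated correctly into a statement about the unordered pair of exponents modulo $m$, and I must be careful that no \emph{carries} or coincidences of exponents arise (they do not here, because weight-$2$ elements with distinct exponents stay weight-$2$ under multiplication by powers of $2$ precisely because $2^{m}\equiv 1$ permutes the $m$ bit positions cyclically). A secondary subtlety is the boundary case $t=m/2$, which can only occur when $m$ is even; there the map still lands consistently at $s=m/2=\lfloor m/2\rfloor$, and one should check this does not break injectivity — it does not, since that single coset maps to that single value. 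Handling these two points cleanly is the heart of the proof.
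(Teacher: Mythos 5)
Your proposal is correct and follows essentially the same route as the paper: existence of $\varepsilon(e)$ via Theorem~\ref{teoremaacotados} with $K=2$, uniqueness by observing that the only weight-two elements of $C_n(1+2^t)$ with exponent pattern $\{0,s\}$ have $s\in\{t,\,m-t\}$ (the paper phrases this as $e''\equiv 2^v e'\pmod n$ forcing $v+s=m$, $v=s'$, hence $s'=m-s>\lfloor m/2\rfloor$ unless $t=m/2$), and then injectivity from distinctness of cosets represented in $\Db$ together with surjectivity from $1+2^s\in\Omega(2)$. Your explicit handling of the carry-free cyclic-shift action on exponents modulo $m$ and of the boundary case $t=m/2$ matches the paper's argument in substance, so there is nothing to correct.
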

\begin{proof}
 By Theorem~\ref{teoremaacotados}, for any $e\in\Db\setminus\{1\}$ there exists $e'=1+2^s$, where $0<s\leq \left\lfloor \frac{m}{2}\right\rfloor$, such that $C_n(e')= C_n(e)$. We define $\varepsilon(e)=s$ and we are going to prove that $e'$ is unique in $C_n(e)$ satisfying condition (\ref{acotados}). 
  
  Suppose that exists $e''=1+2^{s'}\in C_n(e),\ e''\neq e'$, satisfying the required condition. Then there must exist $v\in\{1,\dots,m-1\}$ such that $e''\equiv 2^v e'\equiv 2^v+2^{v+s}$ mod $n$. Necessarily this implies $v+s=m, v=s'$. So, $s'=m-s\geq m-\left\lfloor m/2 \right\rfloor\geq \left\lfloor m/2\right\rfloor$; since $e''\neq e'$ then $s'> \left\lfloor m/2\right\rfloor$, a contradiction. So, we can say that $\varepsilon$ is well-defined. Now, for any $s\in\{1,\dots,\left\lfloor \frac{m}{2}\right\rfloor\}$ the element $1+2^s$ belongs to $\Omega(2)$ and so there exists $e\in\Db\setminus\{1\}$ such that $C_n(e)=C_n(1+2^s)$. Moreover, if $e\neq e'\in\Db$ then $C_n(e)\neq C_n(e')$ and so $\varepsilon(e)\neq \varepsilon(e')$. This implies that $\varepsilon$ is a bijection.
\end{proof}

\begin{remark}\label{formulaepsilon}
 Observe that for all $e=2^{t_1}+2^{t_2}\in\Omega(2)$ one has that $1+2^\delta$, with $\delta=\min\{t_2-t_1,m-t_2+t_1\}$, belongs to $C_n(e)$ and verifies (\ref{acotados}), therefore $\varepsilon(e)=\min\{t_2-t_1,m-t_2+t_1\}$.
\end{remark}

Under the notation introduced by Proposition~\ref{epsilon} we can say that for any $e\in\Omega(2)$ the element $1+2^{\varepsilon(e)}$ is the unique element in $C_n(e)$ satisfying condition (\ref{acotados}).

Now, from Proposition~\ref{epsilon} and Lemma~\ref{clases-n} we obtain the next result which gives us the desired cardinalities.

\begin{proposition}\label{cardinalities}
Let $n=2^m-1=r_1\cdot r_2$ satisfying (\ref{restrictions}), that is,  $m=ab$, with  $a$ such that  $r_1=2^a-1$  and $(r_1,n/r_1)=1$. For any $e\in\Omega(2)$ one has that
\begin{enumerate}
 \item  $\left|C_n(e)\right|=
  \begin{cases}
    \displaystyle\frac{m}{2} &\text{ in case }\quad \displaystyle \varepsilon(e)= \frac{m}{2}\\
    &\\
    m &\text{otherwise}
  \end{cases}$
 \vspace{0.4cm}
  
 \item  $\left|C_{r_1}(e)\right|=
  \begin{cases}
    \displaystyle\frac{a}{2} &\text{ in case }\quad \displaystyle \varepsilon(e)\equiv \frac{a}{2} \text{\rm{ mod }} a\\
    &\\
    a &\text{otherwise}
  \end{cases}$
\end{enumerate} 
\end{proposition}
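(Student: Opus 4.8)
The plan is to reduce both cardinality computations to Lemma~\ref{clases-n} by exploiting the fact, provided by Proposition~\ref{epsilon}, that the coset $C_n(e)$ always contains the distinguished element $1+2^{\varepsilon(e)}$ with $1\le\varepsilon(e)\le\lfloor m/2\rfloor$. Since the cardinality of a $2$-cyclotomic coset is constant along the coset (the set $C_n(e)$ does not change if we replace $e$ by another of its members), I would first replace $e$ by $1+2^{\varepsilon(e)}$ in both statements and then apply Lemma~\ref{clases-n} with two different choices of modulus.

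For part (1) this is immediate: taking $\mu=n=2^m-1$ and $\nu=m$, the inequalities $0<\varepsilon(e)\le\lfloor m/2\rfloor\le m-1$ show that the hypothesis of Lemma~\ref{clases-n} is satisfied, and the lemma gives $|C_n(e)|=|C_n(1+2^{\varepsilon(e)})|$ equal to $m/2$ precisely when $\varepsilon(e)=m/2$ and equal to $m$ otherwise, which is the claimed value.

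For part (2) the extra ingredient is the arithmetic of $r_1=2^a-1$. Since $r_1\mid n$, reduction modulo $r_1$ is a ring homomorphism $\Z_n\to\Z_{r_1}$ that commutes with multiplication by $2$; hence the relation $1+2^{\varepsilon(e)}\equiv 2^u e \pmod n$ (valid for some $u$, because $1+2^{\varepsilon(e)}\in C_n(e)$) descends to $1+2^{\varepsilon(e)}\equiv 2^u e \pmod{r_1}$, so that $1+2^{\varepsilon(e)}$ lies in the coset $C_{r_1}(e)$ and therefore $|C_{r_1}(e)|=|C_{r_1}(1+2^{\varepsilon(e)})|$. Now $2^a\equiv 1\pmod{r_1}$ forces $1+2^{\varepsilon(e)}\equiv 1+2^{s'}\pmod{r_1}$, where $s'=\varepsilon(e)\bmod a$. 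A second application of Lemma~\ref{clases-n}, this time with $\mu=r_1=2^a-1$ and $\nu=a$, then yields $a/2$ when $s'=a/2$, i.e. when $\varepsilon(e)\equiv a/2\pmod a$, and $a$ otherwise.

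The only point requiring genuine care, and hence the main obstacle, is the residue $s'=0$, which is not covered by Lemma~\ref{clases-n} (its hypothesis requires $0<t\le\nu-1$). This case occurs exactly when $\varepsilon(e)$ is a nonzero multiple of $a$, and there one computes directly that $1+2^{s'}=2$, so $C_{r_1}(1+2^{\varepsilon(e)})=C_{r_1}(1)$ has cardinality $a$; since $0\not\equiv a/2\pmod a$, this agrees with the stated formula. I would also record that the remaining boundary value $s'=a/2$ is already handled by the lemma even when $1+2^{a/2}$ happens to reduce to $0$ modulo $r_1$ (which occurs only for $a=2$), so that the single formula is valid for every residue $s'\in\{0,\dots,a-1\}$.
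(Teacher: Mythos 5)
Your proposal is correct and follows essentially the same route as the paper's proof: reduce to the canonical representative $1+2^{\varepsilon(e)}$ of $C_n(e)$ via Proposition~\ref{epsilon}, apply Lemma~\ref{clases-n} once modulo $n$ and once modulo $r_1$ (after observing that congruence mod $n$ descends to mod $r_1$ and that $2^a\equiv 1 \pmod{r_1}$ replaces $\varepsilon(e)$ by its residue), and treat the residue-zero case separately, exactly as the paper does. Your additional check of the $a=2$ boundary (where $1+2^{a/2}\equiv 0 \pmod{r_1}$) is a careful extra that the paper omits, but it does not change the argument.
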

\begin{proof}
 Let $e=2^{t_1}+2^{t_2}\in\Omega(2)$ and $e'=1+2^{\varepsilon(e)}$. Then $C_n(e)=C_n(e')$. Moreover $\varepsilon(e)=t_2-t_1$ or $\varepsilon(e)=m-(t_2-t_1)$ (see Remark~\ref{formulaepsilon}). By Lemma~\ref{clases-n} $\left|C_n(e')\right|=m/2$ if $\varepsilon(e)=m/2$ and $\left|C_n(e`)\right|=m$ otherwise, which yields \textit{1)}.
 
 To prove \textit{2)} note that there exists an integer $v\in\{0,\dots,m-1\}$ such that $e'\equiv 2^v e$ modulo $n$. Since $n=r_1\cdot r_2$ we have that $e'\equiv 2^v e$ modulo $r_1$. Then $C_{r_1}(e)=C_{r_1}(e')$ and so $\left|C_{r_1}(e)\right|=\left|C_{r_1}(e')\right|$. Note also  that $e'\equiv 1+2^\mu$ mod $r_1$ where $\mu$ is the residue of $\varepsilon(e)$ modulo $a$. By Lemma~\ref{clases-n} $\left|C_{r_1}(e')\right|=a/2$ if $\varepsilon(e)\equiv a/2$ mod $a$ and $\left|C_{r_1}(e')\right|=a$ otherwise. Finally, observe that if $\varepsilon(e)\equiv 0$ mod $a$ then $e'\in C_{r_1}(1)$ and so $\left|C_{r_1}(e')\right|=\left|C_{r_1}(1)\right|=a$. This finishes the proof.
\end{proof}

Now, all that is required to obtain the expressions (\ref{M(u)}) is the description of the sets $\Or(u)$ with $u\in \U$, that is, $\Or(1)$ and $\Or(e)$ with $e\in \Omega(2)\cap\U$. The next results solve this problem. The first one gives the information we need about $\Or(1)$. Recall that all throughout $\Db$ denotes a suitable set of representatives.

\begin{proposition}\label{proposicionOde1}
 \begin{enumerate}
  \item  $\Or(1)=\{1\}\cup \{e\in\Db\mid \varepsilon(e)=\lambda a \text{ with } 1\leq \lambda \leq \left\lfloor \frac{b}{2}\right\rfloor\}$.
  \item $\left|\Or(1)\right|=1+\left\lfloor b/2\right\rfloor$.
  \item If $b$ is even then $\varepsilon^{-1}(m/2)\in\Or(1)$, $\left|C_n(\varepsilon^{-1}(m/2))\right|=m/2$, and $\left|C_n(e)\right|=m$ for any $e\in\Or(1)\setminus\{\varepsilon^{-1}(m/2)\}$.
  \item If $b$ is odd then $\left|C_n(e)\right|=m$ for any $e\in\Or(1)$.
 \end{enumerate}
\end{proposition}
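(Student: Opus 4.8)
The plan is to characterize $\Or(1)$ by using the relation $\sim$ defined in (\ref{relacionU}), the bijection $\varepsilon$ from Proposition~\ref{epsilon}, and the cardinality formulas in Proposition~\ref{cardinalities}. Recall that $\Or(1)=\{e\in\Db\mid e\sim 1\}$, meaning $e\equiv 1 \bmod r_1$. Since $1\in\U$, the element $1$ itself belongs to $\Or(1)$, and every other element of $\Or(1)$ lies in $\Omega(2)$. So the core question is: for which $e\in\Omega(2)\cap\Db$ do we have $e\equiv 1 \bmod r_1$?

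First I would prove part \emph{1)}. For $e\in\Db\setminus\{1\}$, by Proposition~\ref{epsilon} there is a unique $1+2^{\varepsilon(e)}\in C_n(e)$ with $1\leq\varepsilon(e)\leq\lfloor m/2\rfloor$. The key observation is that $e\equiv 1 \bmod r_1$ holds exactly when $C_{r_1}(e)=C_{r_1}(1)$, i.e.\ when the reduction of $1+2^{\varepsilon(e)}$ modulo $r_1=2^a-1$ collapses to a power of $2$. Reducing modulo $r_1$, one has $1+2^{\varepsilon(e)}\equiv 1+2^{\mu}\bmod r_1$ where $\mu$ is the residue of $\varepsilon(e)$ modulo $a$; this congruence lies in $C_{r_1}(1)=\{1,2,\dots,2^{a-1}\}$ precisely when $\mu=0$, that is, when $a\mid\varepsilon(e)$. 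Writing $\varepsilon(e)=\lambda a$, the constraint $1\leq\varepsilon(e)\leq\lfloor m/2\rfloor=\lfloor ab/2\rfloor$ forces $1\leq\lambda\leq\lfloor b/2\rfloor$. This gives the stated description of $\Or(1)$.

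For part \emph{2)}, I would note that $\varepsilon$ is a bijection, so the elements of $\Or(1)\setminus\{1\}$ are in bijection with the admissible values $\lambda a$ for $1\leq\lambda\leq\lfloor b/2\rfloor$; counting these gives $\lfloor b/2\rfloor$ elements, hence $|\Or(1)|=1+\lfloor b/2\rfloor$. For parts \emph{3)} and \emph{4)}, I would apply Proposition~\ref{cardinalities}\emph{1)}, which says $|C_n(e)|=m/2$ exactly when $\varepsilon(e)=m/2$ and $|C_n(e)|=m$ otherwise. Since every $e\in\Or(1)$ has $\varepsilon(e)=\lambda a$, the case $\varepsilon(e)=m/2=ab/2$ means $\lambda=b/2$, which requires $b$ even. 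When $b$ is even, $\lambda=b/2$ is an admissible value in range, so $\varepsilon^{-1}(m/2)\in\Or(1)$ with $|C_n(\varepsilon^{-1}(m/2))|=m/2$, while all other elements have $\varepsilon(e)<m/2$ and thus $|C_n(e)|=m$. When $b$ is odd, $m/2=ab/2$ is not of the form $\lambda a$ with integer $\lambda$, so no $e\in\Or(1)$ satisfies $\varepsilon(e)=m/2$, giving $|C_n(e)|=m$ throughout.

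The main obstacle will be part \emph{1)}, specifically the reduction argument modulo $r_1$: I must argue cleanly that $1+2^{\varepsilon(e)}\bmod r_1$ equals a single power of $2$ \emph{if and only if} $a\mid\varepsilon(e)$, ruling out any accidental collapse for other residues $\mu$. The one-directional implication ($a\mid\varepsilon(e)\Rightarrow$ power of $2$) is immediate since $2^a\equiv 1\bmod r_1$; the converse requires checking that $1+2^\mu$ with $1\leq\mu\leq a-1$ is never congruent to a power of $2$ modulo $2^a-1$, which follows from comparing binary weights (weight $2$ versus weight $1$) in the range below $r_1$. Everything else is bookkeeping with the bijection $\varepsilon$ and the cardinality formulas already established.
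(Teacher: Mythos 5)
Your proposal follows the paper's own route almost step for step: both proofs characterize membership in $\Or(1)$ through the residue $\mu$ of $\varepsilon(e)$ modulo $a$, both rule out $\mu\neq 0$ by comparing binary weights inside $\{1,\dots,r_1-1\}$ (the paper compresses this into ``this is only possible in the case $\mu=0$,'' which you actually spell out more explicitly), and your parts \emph{2)}--\emph{4)} coincide with the paper's use of the bijectivity of $\varepsilon$ from Proposition~\ref{epsilon} and the cardinality formulas of Proposition~\ref{cardinalities}.

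There is, however, one genuine gap, and it sits in the step you call the ``key observation'': the claim that $e\equiv 1 \bmod r_1$ holds \emph{exactly} when $C_{r_1}(e)=C_{r_1}(1)$ is false as stated, and its repair is precisely where the hypothesis that $\Db$ is a \emph{suitable} set of representatives must enter. The forward implication is trivial, but $C_{r_1}(e)=C_{r_1}(1)$ only forces $e\equiv 2^j \bmod r_1$ for some $j$, not $e\equiv 1$. This matters for the inclusion $\{e\in\Db\mid \varepsilon(e)=\lambda a,\ 1\leq\lambda\leq\lfloor b/2\rfloor\}\subseteq\Or(1)$: for instance, $e=2\,(1+2^{\lambda a})$ lies in $C_n(1+2^{\lambda a})$ and satisfies $e\equiv 4\bmod r_1$, so a non-suitable complete set of representatives could contain such an $e$ with $\varepsilon(e)=\lambda a$ yet $e\notin\Or(1)$, breaking your biconditional. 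The paper closes exactly this point with the sentence ``Since $\Db$ is a suitable set of representatives we conclude that $e\equiv 1$ modulo $r_1$'': by Definition~\ref{suitable}, the first projections of the elements of $T(\Db)$ form a complete set of representatives of the $2$-cyclotomic cosets modulo $r_1$, so two elements of $\Db$ whose projections lie in the same coset must have equal projections; since $1\in\Db$ and $C_{r_1}(e)=C_{r_1}(1)$, this yields $e\equiv 1\bmod r_1$. Adding this one line makes your argument complete; everything else --- the range computation $1\leq\lambda\leq\lfloor b/2\rfloor$, the count in \emph{2)}, and the parity analysis of $b$ in \emph{3)} and \emph{4)} (including the observation that $\varepsilon(e)=m/2$ would force $\lambda=b/2\in\Z$) --- matches the paper's proof.
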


\begin{proof}
 Let $e\in\Db, e\neq 1$, such that $e\equiv 1$ mod $r_1$. Then $C_{r_1}(1+2^{\varepsilon(e)})=C_{r_1}(e)=C_{r_1}(1)$. Let $\mu$ be the residue of $\varepsilon(e)$ modulo $a$, then $C_{r_1}(1+2^{\varepsilon(e)})=C_{r_1}(1+2^\mu)=C_{r_1}(1)$. So $1+2^\mu$ and $1$ are different elements in $\{1,\dots,r_1-1\}$ that belongs to the same $2$-cyclotomic coset modulo $r_1$; this is only possible in the case $\mu=0$. Therefore $\varepsilon(e)\equiv 0$ mod $a$. Let $\varepsilon(e)=\lambda a$ with $\lambda$ a natural number. It is easy to see that this implies $1\leq \lambda\leq\left\lfloor \frac{b}{2}\right\rfloor$ because $\varepsilon(e)\leq \left\lfloor \frac{m}{2}\right\rfloor$ (recall that $n=2^m-1$ where $m=ab$). Conversely, if $e\in\Db$ and $\varepsilon(e)=\lambda a$ then $C_{r_1}(e)=C_{r_1}(1+2^{\lambda a})=C_{r_1}(2)=C_{r_1}(1)$. Since $\Db$ is a suitable set of representatives we conclude that $e\equiv 1$ modulo $r_1$. This proves \textit{1)}.
 
 Statement \textit{2)} is immediate from \textit{1)} and Proposition~\ref{epsilon}. 
 
 Now, suppose that $b$ is even. Then $\frac{m}{2}=a\cdot\frac{b}{2}$, so $\varepsilon^{-1}(m/2)\in\Or(1)$ by \textit{1)}. The last part of \textit{3)} follows from Proposition~\ref{cardinalities}.
 
 Finally, suppose that $b$ is odd. Then, in case $m$ even,  $\frac{m}{2}\equiv\frac{a}{2}$ modulo $a$, so $\varepsilon^{-1}(m/2)\notin\Or(1)$. Therefore, $\left|C_n(e)\right|=m$ for any $e\in\Or(1)$ by Proposition~\ref{cardinalities}. This finishes the proof. 
\end{proof}

\begin{remark} \label{elegirU}
 As we have already seen we assume that $1\in\U\subseteq\Db$. Furthermore, depending on the parity of $a$ and $b$ we will take the following elements in $\U$ by convention:
 \begin{enumerate}
  \item If $m$ is even we also assume that $\varepsilon^{-1}(m/2)\in \U$  unless $\varepsilon^{-1}(m/2)\in\Or(1)$.
  \item If $a$ is even we assume that $\varepsilon^{-1}(a/2)\in \U$   unless $\varepsilon^{-1}(a/2)\in\Or(\varepsilon^{-1}(m/2))$. Observe that $\varepsilon^{-1}(a/2)$ never belongs to $\Or(1)$.
 \end{enumerate}

\end{remark}

\vspace{.2cm}

The next result yields the description of the set $\Or(e)$ for any $e\in\U\setminus\{1\}$. Let us observe that $e\in \U\setminus\{1\}$ implies $e\in\U\cap \Omega(2)$.
\begin{proposition}\label{proposicionOdee}
 Let $e\in \U\setminus\{1\}$. Then $\Or(e)=\{e'\in\Db\mid \varepsilon(e')\equiv \varepsilon(e) \text{ mod } a \text{ \rm{or} } \varepsilon(e')\equiv a-\varepsilon(e) \text{ mod } a\}$.
\end{proposition}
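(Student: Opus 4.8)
The plan is to characterize $\Or(e)$ via the congruence relation modulo $r_1$ that defines the equivalence relation $\sim$ from (\ref{relacionU}), and to translate that relation into a condition on the values of $\varepsilon$. Recall that by definition $\Or(e)=\{e'\in\Db\mid e'\equiv e \text{ mod } r_1\}$, so the key is to decide, for two elements $e,e'\in\Db\cap\Omega(2)$, exactly when $e'\equiv e$ modulo $r_1$ in terms of $\varepsilon(e)$ and $\varepsilon(e')$. By Proposition~\ref{epsilon}, each such element satisfies $C_n(e)=C_n(1+2^{\varepsilon(e)})$, and reducing modulo $r_1=2^a-1$ one gets $C_{r_1}(e)=C_{r_1}(1+2^{\mu})$ where $\mu$ is the residue of $\varepsilon(e)$ modulo $a$ (this reduction step is exactly the one already carried out in the proof of Proposition~\ref{cardinalities}, part \textit{2)}). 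So $e'\sim e$ modulo $r_1$ if and only if $C_{r_1}(1+2^{\mu'})=C_{r_1}(1+2^{\mu})$, where $\mu,\mu'$ are the residues of $\varepsilon(e),\varepsilon(e')$ modulo $a$.

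First I would reduce the whole problem to understanding, inside $\Z_{r_1}$, when two elements of the form $1+2^{\mu}$ and $1+2^{\mu'}$ (with $0\le \mu,\mu'<a$) lie in the same $2$-cyclotomic coset modulo $r_1$. This is a small, self-contained computation using $r_1=2^a-1$: multiplying $1+2^{\mu}$ by $2^{j}$ modulo $2^a-1$ cyclically shifts the two ``active'' bit positions $0$ and $\mu$ by $j$ (mod $a$), so the coset $C_{r_1}(1+2^{\mu})$ consists precisely of the elements whose two active bit-positions are separated by a cyclic gap of $\mu$ or, equivalently, $a-\mu$. Hence $C_{r_1}(1+2^{\mu})=C_{r_1}(1+2^{\mu'})$ holds exactly when $\mu'\equiv\mu$ or $\mu'\equiv a-\mu$ modulo $a$. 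Unwinding the residues back to $\varepsilon$, this says $\varepsilon(e')\equiv\varepsilon(e)$ or $\varepsilon(e')\equiv a-\varepsilon(e)$ modulo $a$, which is precisely the stated description of $\Or(e)$.

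With that bit-shift lemma in hand the proof is a matter of assembling the two inclusions: if $e'\in\Or(e)$ then $C_{r_1}(e')=C_{r_1}(e)$ forces the coset equality above, giving one of the two congruences; conversely, if $\varepsilon(e')$ satisfies one of the two congruences, then $C_{r_1}(e')=C_{r_1}(e)$, and since $\Db$ is a suitable set of representatives (so the projection to $\Z_{r_1}$ is injective on $\sim$-classes in the required sense), we recover $e'\equiv e$ modulo $r_1$, i.e. $e'\in\Or(e)$. The one case deserving care is whether the symmetric alternative $\varepsilon(e')\equiv a-\varepsilon(e)$ can collapse into the first, $\varepsilon(e')\equiv \varepsilon(e)$; this happens exactly when $2\varepsilon(e)\equiv 0$ modulo $a$, i.e. when $\varepsilon(e)\equiv a/2$, and I would note that this coincides precisely with the degenerate case flagged in Lemma~\ref{clases-n} and Proposition~\ref{cardinalities} where the cyclotomic coset has half size.

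The main obstacle I anticipate is the correct handling of the ``folding'' caused by $r_1=2^a-1$: the value $\varepsilon(e)$ lives in $\{1,\dots,\lfloor m/2\rfloor\}$ and is constrained to $\le \lfloor m/2\rfloor$, while the coset structure modulo $r_1$ only sees its residue modulo $a$, so both the reduction $\varepsilon(e)\mapsto \varepsilon(e)\bmod a$ and the reflection $\mu\mapsto a-\mu$ must be tracked simultaneously. Getting the two congruence branches to line up with the two ways a pair $\{0,\mu\}$ of bit-positions can be realized as a cyclic gap — and verifying that suitability of $\Db$ rules out any spurious coincidences — is where the argument requires genuine attention rather than routine bookkeeping.
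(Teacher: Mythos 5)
Your proposal is correct and follows essentially the same route as the paper's proof: reduce $\varepsilon(e),\varepsilon(e')$ to their residues $\mu,\mu'$ modulo $a$, observe that $e'\equiv e$ mod $r_1$ is equivalent to $C_{r_1}(1+2^{\mu})=C_{r_1}(1+2^{\mu'})$, characterize when this coset equality holds, and use suitability of $\Db$ for the converse inclusion. The only difference is that you spell out, via the cyclic bit-rotation argument modulo $2^a-1$, why the coset equality forces $\mu'=\mu$ or $\mu'=a-\mu$ — a fact the paper simply asserts — so your write-up is, if anything, slightly more complete.
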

\begin{proof}
 Let $e'\in \Db$ such that $e'\equiv e$ mod $r_1$. Then $C_{r_1}(1+2^{\varepsilon(e)})=C_{r_1}(1+2^{\varepsilon(e')})$. Let $\mu$ and $\mu'$ be the residues of $\varepsilon(e)$ and $\varepsilon(e')$ modulo $a$ respectively. So $C_{r_1}(1+2^\mu)=C_{r_1}(1+2^{\mu'})$, that is, $1+2^\mu$ and $1+2^{\mu'}$ are elements in $\{1,\dots,r_1-1\}$ that belong to the same $2$-cyclotomic coset modulo $r_1$. This is only possible in the cases $\mu=\mu'$ or $\mu=a-\mu'$. 
 
 Conversely, suppose that $e'\in\Db$ and $\varepsilon(e')\equiv\varepsilon(e)$ modulo $a$. Then, $C_{r_1}(e)=C_{r_1}(1+2^{\varepsilon(e)})=C_{r_1}(1+2^{\varepsilon(e')})=C_{r_1}(e')$. Since $\Db$ is a suitable set of representatives this implies $e'\equiv e$ modulo $r_1$. Finally if $e'\in\Db$ and $\varepsilon(e')\equiv a-\varepsilon(e)$ modulo $a$ we have the following sequence of equalities: $C_{r_1}(e')=C_{r_1}(1+2^{\varepsilon(e')})=C_{r_1}(1+2^{a-\varepsilon(e)})=C_{r_1}(1+2^{\varepsilon(e)})=C_{r_1}(e)$. Again, we conclude that $e'\equiv e$ modulo $r_1$. So we are done. 
\end{proof}

The following propositions complete the information about the sets $\Or(e)$, with $e\in\U\setminus\{1\}$, by making a distinction between the cases $b$ even and $b$ odd ($m=ab$).

\begin{proposition}\label{proposicionOdeeEVEN}
 Let $e\in\U\setminus\{1\}$ and suppose that $b$ is even. Then
 \begin{enumerate}
  \item For any $e'\in\Or(e)$ one has that $\left|C_n(e')\right|=m$.
  \item If $a$ is odd then $\left|\Or(e)\right|=b$ and $\left|C_{r_1}(e)\right|=a$.
  \item If $a$ is even then in case $e=\varepsilon^{-1}(a/2)$ one has that $\left|\Or(e)\right|=b/2$, $\left|C_{r_1}(e)\right|=a/2$ and otherwise $\left|\Or(e)\right|=b$, $\left|C_{r_1}(e)\right|=a$.  
 \end{enumerate}
\end{proposition}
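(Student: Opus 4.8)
The plan is to leverage Proposition~\ref{proposicionOdee} together with Proposition~\ref{cardinalities} to reduce everything to an analysis of which residues modulo $a$ arise as $\varepsilon(e')$ for $e'\in\Or(e)$. First I would dispatch statement \textit{1)}: for any $e'\in\Or(e)$, Proposition~\ref{cardinalities} tells us that $\left|C_n(e')\right|=m$ unless $\varepsilon(e')=m/2$. But $m/2=a\cdot(b/2)\equiv 0 \pmod a$ since $b$ is even, whereas by Proposition~\ref{proposicionOdee} every $e'\in\Or(e)$ satisfies $\varepsilon(e')\equiv\pm\varepsilon(e)\pmod a$ with $e\neq 1$, forcing $\varepsilon(e')\not\equiv 0\pmod a$ (otherwise Proposition~\ref{proposicionOde1} would place $e'$ in $\Or(1)$, contradicting $e\in\U\setminus\{1\}$). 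Hence $\varepsilon(e')\neq m/2$ and $\left|C_n(e')\right|=m$, giving \textit{1)}.

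Next I would compute $\left|C_{r_1}(e)\right|$ via Proposition~\ref{cardinalities}, part \textit{2)}: the value is $a/2$ precisely when $\varepsilon(e)\equiv a/2\pmod a$ and $a$ otherwise. This immediately splits into the cases on the parity of $a$. If $a$ is odd then $a/2$ is not an integer, so the condition $\varepsilon(e)\equiv a/2\pmod a$ never holds and $\left|C_{r_1}(e)\right|=a$ for every $e$, settling that half of \textit{2)} and the corresponding clauses of \textit{3)}. The cardinality $\left|\Or(e)\right|$ I would obtain by counting the integers $s$ with $1\leq s\leq\left\lfloor m/2\right\rfloor$ (the domain of $\varepsilon$ from Proposition~\ref{epsilon}) whose residue mod $a$ equals $\varepsilon(e)$ or $a-\varepsilon(e)$; since $\varepsilon$ is a bijection onto this interval, $\left|\Or(e)\right|$ equals the number of such $s$. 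The interval $\{1,\dots,\left\lfloor m/2\right\rfloor\}=\{1,\dots,ab/2\}$ contains exactly $b/2$ values in each nonzero residue class mod $a$ (a clean count because $ab/2$ is a multiple of $a$ when $b$ is even). When $\varepsilon(e)$ and $a-\varepsilon(e)$ are distinct residues this yields $\left|\Or(e)\right|=b/2+b/2=b$; when they coincide, i.e.\ $\varepsilon(e)\equiv a-\varepsilon(e)\pmod a$, we get only $b/2$.

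The coincidence $\varepsilon(e)\equiv a-\varepsilon(e)\pmod a$ is equivalent to $2\varepsilon(e)\equiv 0\pmod a$, which for $a$ even means $\varepsilon(e)\equiv a/2\pmod a$ and for $a$ odd is impossible among nonzero residues. This dovetails exactly with the cardinality condition, so for $a$ odd I conclude $\left|\Or(e)\right|=b$ and $\left|C_{r_1}(e)\right|=a$, proving \textit{2)}; while for $a$ even the single exceptional class is $\varepsilon(e)\equiv a/2$, whose unique representative in $\{1,\dots,\left\lfloor m/2\right\rfloor\}$ below $a$ is realized by $e=\varepsilon^{-1}(a/2)$, giving $\left|\Or(e)\right|=b/2$ and $\left|C_{r_1}(e)\right|=a/2$, with $\left|\Or(e)\right|=b$, $\left|C_{r_1}(e)\right|=a$ for all other $e$, which is \textit{3)}. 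The main obstacle I anticipate is the careful bookkeeping in the counting argument: I must verify that the two residue classes $\varepsilon(e)$ and $a-\varepsilon(e)$ are counted without double-counting and that the endpoint $\left\lfloor m/2\right\rfloor=ab/2$ is handled exactly (it is the value $m/2$ lying in class $0$, which is excluded for $e\neq 1$), so that each relevant nonzero class really contributes precisely $b/2$ elements and the totals $b$ and $b/2$ come out cleanly.
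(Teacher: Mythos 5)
Your proof is correct and follows essentially the same route as the paper: both reduce $\Or(e)$ via Proposition~\ref{proposicionOdee} to the two residue classes $\varepsilon(e)$ and $a-\varepsilon(e)$ modulo $a$, invoke Proposition~\ref{cardinalities} for the coset cardinalities, handle part \textit{1)} by noting $m/2\equiv 0\pmod{a}$ forces $\varepsilon^{-1}(m/2)\in\Or(1)$, and detect the merging of the two classes exactly when $\varepsilon(e)\equiv a/2\pmod{a}$. The only cosmetic difference is that the paper bounds the range of $\lambda$ in $\varepsilon(e')=\lambda a+\mu$ by explicit inequalities to get $\left|A\right|=\left|B\right|=b/2$, whereas you count the $b/2$ representatives of each nonzero residue class in $\{1,\dots,ab/2\}$ directly via the bijection of Proposition~\ref{epsilon} --- the same computation in different clothing.
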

\begin{IEEEproof} 
 First of all note that since $b$ is even then $\frac{m}{2}\equiv a$ modulo $a$ which implies $\varepsilon^{-1}(m/2)\in\Or(1)$ (see Proposition~\ref{proposicionOde1}), so $e'\neq \varepsilon^{-1}(m/2)$ for all $e'\in\Or(e)$. This proves \textit{1)} by Proposition~\ref{cardinalities}.
 
 In the rest of proof we denote by $\mu$ the residue of $\varepsilon(e)$ modulo $a$. Observe that $\mu\neq 0$ because $e\in\U\setminus\{1\}$. Let us define $A=\{e'\in\Db\mid \varepsilon(e')\equiv \varepsilon(e) \text{ mod } a\}$ and $B=\{e'\in\Db\mid \varepsilon(e')\equiv a-\varepsilon(e) \text{ mod } a\}$. By Proposition~\ref{proposicionOdee}, $\Or(e)=A\cup B$.  Note also that $A\cap B\neq \emptyset$ if and only if $\mu=a/2$, that is, if and only if $\Or(e)=A=B=A\cap B$.
 
 Now we write $A=\{e'\in\Db\mid \varepsilon(e')=\lambda a+\mu, \text{  with } \lambda \text{ a natural number}\}$. To get the cardinality of $A$ we need to study the range of values of $\lambda$. By definition of $\varepsilon$ one has that $1\leq \varepsilon(e')\leq \left\lfloor \frac{m}{2}\right\rfloor$, and since
 $$\displaystyle\frac{b}{2} a+\mu=\frac{m}{2}+\mu>\frac{m}{2}\geq\left\lfloor \frac{m}{2}\right\rfloor$$
and
 $$\displaystyle \left(\frac{b}{2}-1\right)a+\mu=\frac{m}{2}-a+\mu<\frac{m}{2},$$
 we conclude that $\left|A\right|=\left|\{\lambda\in \Z\mid 0\leq \lambda< \frac{b}{2}\}\right|=b/2$. An analogous argument yields $\left|B\right|=b/2$.  
 
 To prove \textit{2)} we assume that $a$ is odd. Then,  $\left|\Or(e)\right|=\left|A\right|+\left|B\right|$, that is, $\left|\Or(e)\right|=\frac{b}{2}+\frac{b}{2}=b$. The equality $\left|C_{r_1}(e)\right|=a$ follows from Proposition~\ref{cardinalities}.
 
 Finally, we suppose that $a$ is even. Then $\varepsilon^{-1}(a/2)\in\U$ (see Remark~\ref{elegirU}). If $e=\varepsilon^{-1}(a/2)$ then $\left|\Or(e)\right|=\left|A\right|=b/2$ and $\left|C_{r_1}(e)\right|=a/2$. If $e\neq\varepsilon^{-1}(a/2)$ then, by repeating the arguments of the previous paragraph, we obtain $\left|\Or(e)\right|=b$ and $\left|C_{r_1}(e)\right|=a$.
\end{IEEEproof} 

\begin{proposition}\label{proposicionOdeeODD}
Let $e\in\U\setminus\{1\}$ and suppose that $b$ is odd. Then $\left|C_n(e')\right|=m$ for all $e'\in\Or(e)\setminus\{e\}$ and
 \begin{enumerate}
  \item If $a$ is odd then $\left|\Or(e)\right|=b$, $\left|C_{r_1}(e)\right|=a$ and $\left|C_n(e)\right|=m$.
  \item If $a$ is even then
   \begin{enumerate}
    \item If $e=\varepsilon^{-1}(m/2)$ then $\left|\Or(e)\right|=(b+1)/2$, $\left|C_n(e)\right|=m/2$ and $\left|C_{r_1}(e)\right|=a/2$.
    \item If $e\neq \varepsilon^{-1}(m/2)$ then $\left|\Or(e)\right|=b$, $\left|C_n(e')\right|=m$ and $\left|C_{r_1}(e)\right|=a$.
   \end{enumerate}
 \end{enumerate}
\end{proposition}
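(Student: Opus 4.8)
The plan is to follow the scheme of the proof of Proposition~\ref{proposicionOdeeEVEN}, adapting the combinatorial count to the parity $b$ odd. As there, I would let $\mu$ be the residue of $\varepsilon(e)$ modulo $a$; since $e\in\U\setminus\{1\}$, Proposition~\ref{proposicionOde1} forces $\mu\neq 0$, so $1\le\mu\le a-1$. Writing $A=\{e'\in\Db\mid \varepsilon(e')\equiv\mu \bmod a\}$ and $B=\{e'\in\Db\mid \varepsilon(e')\equiv a-\mu\bmod a\}$, Proposition~\ref{proposicionOdee} gives $\Or(e)=A\cup B$, with $A\cap B\neq\emptyset$ exactly when $\mu=a/2$ (which requires $a$ even). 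The core computation is to count $A$ and $B$, and for this I would use that, by Proposition~\ref{epsilon}, $\varepsilon$ is a bijection from $\Db\setminus\{1\}$ onto $\{1,\dots,\lfloor m/2\rfloor\}$; hence $|A|$ equals the number of integers $s\in\{1,\dots,\lfloor m/2\rfloor\}$ with $s\equiv\mu\bmod a$, and likewise for $B$.

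Next I would carry out the floor-function count, splitting according to whether $2\mu\le a$ or $2\mu>a$. Using $m=ab$ with $b$ odd, the admissible multiples are governed by the inequalities $\frac{b-1}{2}a+\mu\le\lfloor m/2\rfloor<\frac{b+1}{2}a+\mu$ (the analogue of the two displayed bounds in the even case, but now with $\frac{b\pm 1}{2}$, since the odd parity of $b$ is exactly what makes $\lfloor m/2\rfloor$ fall strictly between consecutive such multiples). This yields that the count is $(b+1)/2$ when $2\mu\le a$ and $(b-1)/2$ when $2\mu>a$. Applying this to the residues $\mu$ and $a-\mu$, whenever $\mu\neq a/2$ exactly one of the two is $\le a/2$ and the other $>a/2$, so $A$ and $B$ are disjoint with $|A|+|B|=(b+1)/2+(b-1)/2=b$, giving $|\Or(e)|=b$.

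For the cardinality statements I would invoke Proposition~\ref{cardinalities} together with the observation that, since $b$ is odd, $m/2=a\cdot\frac{b-1}{2}+\frac{a}{2}\equiv a/2\bmod a$ when $a$ is even; thus $\varepsilon^{-1}(m/2)$ (the unique element with $|C_n(\cdot)|=m/2$) has residue $a/2$ and, being $\not\equiv 1\bmod r_1$, lies in $\U$ by Remark~\ref{elegirU}(1). This is the key point separating the subcases: it forces $\varepsilon^{-1}(m/2)$ to be the $\U$-representative of the residue-$a/2$ class, so the only $e\in\U\setminus\{1\}$ with $\mu=a/2$ is $e=\varepsilon^{-1}(m/2)$. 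In that case $A=B$, $\Or(e)=A$ has $(b+1)/2$ elements, and Proposition~\ref{cardinalities} gives $|C_n(e)|=m/2$ and $|C_{r_1}(e)|=a/2$, which is part 2(a). In every other case $\mu\neq a/2$, so $\varepsilon^{-1}(m/2)\notin\Or(e)$, every element of $\Or(e)$ has $|C_n(\cdot)|=m$, and the count above gives $|\Or(e)|=b$ and $|C_{r_1}(e)|=a$; this covers part 1 (for $a$ odd, where moreover $m$ is odd, so no element attains $m/2$) and part 2(b). The opening assertion that $|C_n(e')|=m$ for all $e'\in\Or(e)\setminus\{e\}$ then follows uniformly, since the only element of $\Or(e)$ that can have $|C_n(\cdot)|=m/2$ is $\varepsilon^{-1}(m/2)$, and it belongs to $\Or(e)$ solely in subcase 2(a), where it coincides with $e$.

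I expect the main obstacle to be twofold: organizing the floor-function estimate so that a single inequality argument handles $a$ odd and $a$ even simultaneously, and --- more conceptually --- using the choice convention of Remark~\ref{elegirU} to justify that $\mu=a/2$ can occur only for $e=\varepsilon^{-1}(m/2)$, which is precisely what makes the dichotomy 2(a)/2(b) exhaustive and keeps the resulting description of $\Or(e)$ consistent.
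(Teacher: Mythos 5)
Your proposal is correct and follows essentially the same route as the paper's own proof: the same decomposition $\Or(e)=A\cup B$ from Proposition~\ref{proposicionOdee} with $A\cap B\neq\emptyset$ precisely when $\mu=a/2$, the same count of residues $\mu$ and $a-\mu$ in $\{1,\dots,\left\lfloor m/2\right\rfloor\}$ via the bijection of Proposition~\ref{epsilon} (your dichotomy $2\mu\leq a$ versus $2\mu>a$ is exactly the paper's $\mu\leq\left\lfloor a/2\right\rfloor$ versus $\mu\geq\left\lceil a/2\right\rceil$, equivalent since $b$ is odd gives $\left\lfloor b/2\right\rfloor+1=(b+1)/2$), and the same appeal to Remark~\ref{elegirU} and the uniqueness of representatives in $\U$ to show $\mu=a/2$ forces $e=\varepsilon^{-1}(m/2)$, with Proposition~\ref{cardinalities} supplying the cardinalities. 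No gaps.
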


\begin{IEEEproof}
  
  We denote by $\mu$ the residue of $\varepsilon(e)$ modulo $a$ ($\mu> 0$). We are going to use a similar technique to that was used in the proof of the previous proposition. We define $A=\{e'\in\Db\mid \varepsilon(e')\equiv \varepsilon(e) \text{ mod } a\}$ and $B=\{e'\in\Db\mid \varepsilon(e')\equiv a-\varepsilon(e) \text{ mod } a\}$. Then $\Or(e)=A\cup B$. Observe that $A\cap B\neq \emptyset$ if and only if $\mu=a/2$ if and only if $\Or(e)=A=B=A\cap B$. Then
   $$\displaystyle\left\lfloor \frac{b}{2}\right\rfloor a+\mu=\frac{b-1}{2} a+\mu=\frac{m}{2}-\frac{a}{2}+\mu\ ;$$
   this value is less than or equal to $\left\lfloor\frac{m}{2} \right\rfloor$ if and only if $\mu\leq \left\lfloor \frac{a}{2}\right\rfloor$. So $\left|A\right|=\left|\{0\leq\lambda\leq\left\lfloor \frac{b}{2}\right\rfloor\}\right|=\left\lfloor \frac{b}{2}\right\rfloor+1$ in case $\mu\leq \left\lfloor \frac{a}{2}\right\rfloor$ and $\left|A\right|=\left|\{0\leq\lambda<\left\lfloor \frac{b}{2}\right\rfloor\}\right|=\left\lfloor \frac{b}{2}\right\rfloor$ otherwise.
   
   To compute the cardinality of $B$ we observe that 
   $$\displaystyle\left\lfloor \frac{b}{2}\right\rfloor a+a-\mu=\frac{b-1}{2} a+a-\mu=\frac{m}{2}+\frac{a}{2}-\mu.$$
   It is easy to see that this value is less than or equal to $\left\lfloor \frac{m}{2}\right\rfloor$ if and only if $\mu\geq \left\lceil \frac{a}{2}\right\rceil$. Then $\left|B\right|=\left\lfloor \frac{b}{2}\right\rfloor+1$ in case $\mu\geq \left\lceil \frac{a}{2}\right\rceil$ and $\left|B\right|=\left\lfloor \frac{b}{2}\right\rfloor$ otherwise.
   
Now, suppose that $a$ is odd. Then 
$$\left\lceil \frac{a}{2}\right\rceil=\frac{a}{2}+\frac{1}{2},\ \left\lfloor \frac{a}{2}\right\rfloor=\frac{a}{2}-\frac{1}{2}$$ 
and so $\mu\leq\left\lfloor \frac{a}{2}\right\rfloor$ if and only if $\mu<\left\lceil \frac{a}{2}\right\rceil$. Therefore, $\left|\Or(e)\right|=\left|A\right|+\left|B\right|=2\left\lfloor \frac{b}{2}\right\rfloor+1=b$. On the other hand, since $a$ and $m$ are odd, for all $e'\in\Or(e)$ one has that $\left|C_n(e')\right|=m, \left|C_{r_1}(e')\right|=a$, by Proposition~\ref{cardinalities}. This proves \textit{1)}.

Finally, suppose that $a$ is even. Note that since $b$ is odd we have $\frac{m}{2}\equiv\frac{a}{2}$ modulo $a$, so $\varepsilon^{-1}(m/2)\in\U$ and $\varepsilon^{-1}(a/2)\in\Or(\varepsilon^{-1}(m/2))$ (see Remark~\ref{elegirU} and Proposition~\ref{proposicionOdee}).

First, assume that $e=\varepsilon^{-1}(m/2)$. Then $\mu=a/2$ and $\left|\Or(e)\right|=\left|A\right|=\left|B\right|=\left\lfloor \frac{b}{2}\right\rfloor+1=(b+1)/2$. Moreover, we have $\left|C_{r_1}(e)\right|=a/2, \left|C_n(e)\right|=m/2$ and $\left|C_n(e')\right|=m$ for any $e'\in\Or(e)\setminus\{e\}$. This proves \textit{2 a)}.

In the case $e\neq\varepsilon^{-1}(m/2)$ we have that $\mu\neq\frac{a}{2}$ because $e,\varepsilon^{-1}(m/2)\in\U$ and they satisfy $C_{r_1}(e)=C_{r_1}(1+2^\mu)$ and $C_{r_1}(\varepsilon^{-1}(m/2))=C_{r_1}(1+2^{a/2})$. Note that $\mu\neq\frac{a}{2}$ implies again $\mu\leq\left\lfloor \frac{a}{2}\right\rfloor=\frac{a}{2}$ if and only if $\mu<\left\lceil \frac{a}{2}\right\rceil=\frac{a}{2}$. So $\left|\Or(e)\right|=\left|A\right|+\left|B\right|=2\left\lfloor \frac{b}{2}\right\rfloor+1=b$. In addition, $\left|C_{r_1}(e)\right|=a$ and $\left|C_{n}(e')\right|=m$ for all $e'\in\Or(e)$. 
\end{IEEEproof} 

The last result we need before enunciating our main theorem gives us the cardinality of any choice of the set $\U$. It uses the mentioned concept of 2-weight of an integer (see paragraph before Definition~\ref{defRM}). We talk about the 2-weight of a 2-cyclotomic coset when one of its elements (and so all of them) has that 2-weight.

\begin{lemma}\label{cardinaldeU}
For any election of the set of representatives $\U$ one has that $$\left|\U\right|=1+\left|\{C_{r_1}(e)\tq e\in\U\cap \Omega(2)\}\right|=1+\left\lfloor a/2\right\rfloor.$$
\end{lemma}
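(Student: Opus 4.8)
The plan is to reduce both equalities to a count of the distinct $2$-cyclotomic cosets modulo $r_1$ that occur among the elements of $\Db$. By definition $\U$ is a complete set of representatives of the relation $\sim$ of (\ref{relacionU}) on $\Db$, so $\left|\U\right|$ equals the number of $\sim$-classes. Since $\Db$ is suitable (Definition~\ref{suitable}), the projection $T_1(\Db)$ is a complete set of representatives of the $2$-cyclotomic cosets modulo $r_1$ in $T_1(\D^*)$; hence two elements of $\Db$ are $\sim$-equivalent (that is, congruent modulo $r_1$, equivalently equal under $T_1$) precisely when they determine the same coset $C_{r_1}(\cdot)$. Thus $e\mapsto C_{r_1}(e)$ is constant on $\sim$-classes and separates distinct ones, so it restricts to a bijection $\U\to\{C_{r_1}(e)\tq e\in\Db\}$. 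First I would combine this bijection with the partition $\U=\{1\}\cup(\U\cap\Omega(2))$ (disjoint because $1\in\Omega(1)$ while $\U\setminus\{1\}\subseteq\Omega(2)$, and because no $e\in\U\cap\Omega(2)$ can satisfy $C_{r_1}(e)=C_{r_1}(1)$, since that would force $e\sim 1$). This yields directly the first equality $\left|\U\right|=1+\left|\{C_{r_1}(e)\tq e\in\U\cap\Omega(2)\}\right|$.

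It then remains to count the cosets contributed by $\Omega(2)$. For $e\in\U\cap\Omega(2)$ I would pass to the normal form $1+2^{\varepsilon(e)}\in C_n(e)$ of Proposition~\ref{epsilon} and reduce its exponent modulo $a$: writing $\mu$ for the residue of $\varepsilon(e)$ modulo $a$, the relation $2^a\equiv 1 \pmod{r_1}$ gives $C_{r_1}(e)=C_{r_1}(1+2^{\varepsilon(e)})=C_{r_1}(1+2^{\mu})$, exactly as in the proof of Proposition~\ref{proposicionOdee}. Since $e\not\sim 1$ we have $\mu\neq 0$, so every such coset is of the form $C_{r_1}(1+2^\mu)$ with $\mu\in\{1,\dots,a-1\}$. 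Conversely, because $\varepsilon$ is a bijection onto $\{1,\dots,\left\lfloor m/2\right\rfloor\}$ and $\left\lfloor m/2\right\rfloor\geq a$ (as $m=ab$ with $b\geq 2$), each residue $\mu\in\{1,\dots,a-1\}$ is realized by some $e\in\Db\setminus\{1\}$. Hence the set of $\Omega(2)$-cosets is exactly $\{C_{r_1}(1+2^\mu)\tq 1\leq\mu\leq a-1\}$.

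Finally I would count these cosets. Relying on the uniqueness of the binary representation of integers smaller than $r_1=2^a-1$ (the argument already invoked inside Proposition~\ref{proposicionOdee}, in the spirit of Lemma~\ref{clases-n}), I would show that for $\mu,\mu'\in\{1,\dots,a-1\}$ one has $C_{r_1}(1+2^\mu)=C_{r_1}(1+2^{\mu'})$ if and only if $\mu'=\mu$ or $\mu'=a-\mu$. Thus $\mu\mapsto C_{r_1}(1+2^\mu)$ is two-to-one except at the fixed point $\mu=a/2$, which occurs only when $a$ is even. A parity count then gives $(a-1)/2$ distinct cosets when $a$ is odd and $1+(a-2)/2=a/2$ when $a$ is even, that is $\left\lfloor a/2\right\rfloor$ in both cases, which together with the first equality proves the lemma. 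The main obstacle is this last step: establishing that the only coincidences among the cosets $C_{r_1}(1+2^\mu)$ are the symmetry $\mu\leftrightarrow a-\mu$, which is precisely where the hypothesis $r_1=2^a-1$ is essential.
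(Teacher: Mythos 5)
Your proof is correct and follows essentially the same route as the paper's: the first equality via the suitability-induced bijection between $\U$ and the cosets $C_{r_1}(\cdot)$ (the paper's map $\varphi$), and the second by counting the weight-two cosets modulo $r_1=2^a-1$ through the normal forms $1+2^{\varepsilon(e)}$ of Proposition~\ref{epsilon}. The only cosmetic difference is in the final count: the paper labels each such coset directly by its canonical exponent $s\in\{1,\dots,\left\lfloor a/2\right\rfloor\}$, whereas you enumerate all residues $\mu\in\{1,\dots,a-1\}$ and quotient by the involution $\mu\mapsto a-\mu$ --- the same underlying fact about cosets of $1+2^\mu$ modulo $2^a-1$, yielding the same count $\left\lfloor a/2\right\rfloor$.
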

\begin{proof}
 First of all, as we have noted in Remark~\ref{elegirU} we always assume that $1\in\U$, so we restrict our attention to the cardinality of $\U\setminus\{1\}$. 
 We define a map 
 $$\varphi:\U\setminus\{1\}\rightarrow\{C_{r_1}(e)\tq e\in\U\cap \Omega(2)\}$$
 given by $\varphi(e)=C_{r_1}(e)$. We are going to see that $\varphi$ is a bijection. 
 
 Let $e\in\U\setminus\{1\}$. Then $C_{r_1}(e)=C_{r_1}(1+2^{\varepsilon(e)})$. Note that, by Proposition~\ref{proposicionOde1}, the case $\varepsilon(e)\equiv 0$ modulo $a$ implies $e\in\Or(1)$ which yields a contradiction, so $C_{r_1}(e)$ is a coset of weight $2$ and $\varphi$ is well-defined.
 
 Now, let $e',e\in\U\setminus\{1\}, e\neq e'$. By the definitions of suitable set of representatives and $\U$ we have that $C_{r_1}(e)\neq C_{r_1}(e')$, so $\varphi$ is injective. Take $C_{r_1}(e)$ such that $e\in \U\cap \Omega(2)$. Then $e\neq 1$ because $e\in \Omega(2)$, so $C_{r_1}(e)=\varphi(e)$ which implies that $\varphi$ is surjective.

 Finally, to see the right hand equality take any integer $s\in\{1,\dots,\left\lfloor \frac{a}{2}\right\rfloor\}$. Then $1+2^s\in\Omega(2)\subset \D^*$. Since $T_1(1+2^s)=1+2^s$ and $T(\Db)$ is a restricted set of representatives, there exists $e\in \Db\cap\Omega(2)$ such that $C_{r_1}(e)=C_{r_1}(1+2^s)$; moreover, in case $e\notin \U$, there must exist $u\in \U\setminus\{1\}$ such that $u\equiv e$ mod $r_1$ and then $C_{r_1}(u)=C_{r_1}(e)=C_{r_1}(1+2^s)$. Finally, if $u\in\U\cap \Omega(2)$ then there exists $s_u\in\{1,\dots,\left\lfloor \frac{a}{2}\right\rfloor\}$ such that $1+2^{s_u}\in C_{r_1}(u)$, and if $u\neq u'$ then $s_u\neq s_{u'}$ because $T(\Db)$ is a restricted set of representatives. This finishes the proof. 
\end{proof}

\vspace{0.2cm}

Now, we can present the  main result for second-order Reed-Muller codes. 

\begin{theorem}\label{teoremainfosetsegundoorden}
 Let $\C^*=R^*(m-3,m)$. Suppose that $n=2^m-1=r_1\cdot r_2$, where $m=ab$, $r_1=2^a-1$ and $\gcd(r_1,r_2)=1, r_1,r_2>1$. Let $\D^*\subseteq \Z_n$ be the defining set of $\C^*$, with respect to $\alpha$, a primitive $n$-th root of unity, and $\C\subseteq \A(r_1,r_2)$ the abelian code with defining set $D(\C)=T(\D^*)$. Then, the values of $f_i$ and $g_i$ from (\ref{sequence f[i]}) and (\ref{sequence gi}) are
  $$f_1=b^2, f_2=\frac{b(b+1)}{2}\quad \text{ and }\quad g_1=\frac{a(a-1)}{2}, g_2=\frac{a(a+1)}{2},$$
  respectively.
  
  Therefore the set $T^{-1}\left(\Gamma\right)$ where $\Gamma=\Gamma(\C)=\left\{(i_1,i_2)\in\Z_{r_1}\times\Z_{r_2} \text{ such that }\right.$

  $$\left.\begin{array}{l}
  \displaystyle 0\leq i_1<\frac{a(a-1)}{2}\ \text{ and }\ 0\leq i_2<b^2 \\
  \\
  \text{ or }\\
  \\
  \displaystyle\frac{a(a-1)}{2}\leq i_1<\frac{a(a+1)}{2}\ \text{ and }\ 0\leq i_2<\frac{b(b+1)}{2} 
  \end{array}\right\}$$
 is a set of check positions for $R^*(m-3,m)$. Furthermore, $\{0,\alpha^i\mid i\in T^{-1}\left(\Gamma\right)\}$ is an information set for $R(2,m)$ and $\{\alpha^i\mid i\notin T^{-1}\left(\Gamma\right)\}$ is an information set for $R(m-3,m)$.
\end{theorem}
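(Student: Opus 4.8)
The plan is to evaluate the single quantity $M(1)$ and the common quantity $M(e)$ for $e\in\U\setminus\{1\}$ directly from the expression (\ref{M(u)}), using the structural information about the orbits $\Or(u)$ assembled in Propositions~\ref{proposicionOde1}, \ref{proposicionOdeeEVEN} and~\ref{proposicionOdeeODD}, then read off the sequences (\ref{sequence f[i]}) and (\ref{sequence gi}) and substitute into (\ref{checkpositions}). Throughout I use $m=ab$, the facts $C_n(1)=\Omega(1)$ (so $|C_n(1)|=m$) and $|C_{r_1}(1)|=a$ from Lemma~\ref{clases-n}, and the count $|\U|=1+\lfloor a/2\rfloor$ from Lemma~\ref{cardinaldeU}. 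Note also that $b\geq 2$: if $b=1$ then $r_1=2^a-1=2^m-1=n$, forcing $r_2=1$, against the hypothesis.

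First I would compute $M(1)=\frac{1}{a}\sum_{v\in\Or(1)}|C_n(v)|$. By Proposition~\ref{proposicionOde1}, when $b$ is odd all $(b+1)/2$ elements of $\Or(1)$ satisfy $|C_n(v)|=m$, while when $b$ is even exactly one element contributes $m/2$ and the remaining $b/2$ contribute $m$; in both parities the sum equals $\frac{m(b+1)}{2}$, whence $M(1)=\frac{m(b+1)}{2a}=\frac{b(b+1)}{2}$. Next I would treat $e\in\U\setminus\{1\}$, which necessarily lies in $\Omega(2)$. Running through the four parity combinations of $(a,b)$ via Propositions~\ref{proposicionOdeeEVEN} and~\ref{proposicionOdeeODD}, the key observation is that $M(e)=b^2$ in every case: when $|C_{r_1}(e)|=a$ one has $|\Or(e)|=b$ with $|C_n(v)|=m$ for each $v\in\Or(e)$, giving $\frac{1}{a}\cdot b\cdot m=b^2$; and in the exceptional situations where the halved values $|C_{r_1}(e)|=a/2$ occur (for $e=\varepsilon^{-1}(a/2)$ or $e=\varepsilon^{-1}(m/2)$), the orbit simultaneously either shrinks to size $b/2$ or acquires one element $v$ with $|C_n(v)|=m/2$, so the two factors of $2$ cancel and again $M(e)=b^2$. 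Since $b\geq 2$ gives $b^2>\frac{b(b+1)}{2}$, the parameter $M$ takes exactly the two distinct values $b^2>\frac{b(b+1)}{2}$, so $s=2$ and $f_1=b^2$, $f_2=\frac{b(b+1)}{2}$ as claimed.

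For the weights I would use $g_k=\sum_{u\in\U,\,M(u)\geq f_k}|C_{r_1}(u)|$. Only the elements of $\U\setminus\{1\}$ meet $M(u)\geq f_1$, so $g_1=\sum_{u\in\U\setminus\{1\}}|C_{r_1}(u)|$; this index set has $\lfloor a/2\rfloor$ elements by Lemma~\ref{cardinaldeU}, each with $|C_{r_1}(u)|=a$ save for a single exceptional element (present exactly when $a$ is even) whose coset has size $a/2$. A one-line count in each parity of $a$ gives $g_1=\frac{a(a-1)}{2}$, and since every $u\in\U$ satisfies $M(u)\geq f_2$, adding the contribution $|C_{r_1}(1)|=a$ of $u=1$ yields $g_2=a+\frac{a(a-1)}{2}=\frac{a(a+1)}{2}$. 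Substituting $f_1,f_2,f_3=0$ and $g_1,g_2$ into (\ref{checkpositions}) produces two rectangles which, regrouped by the ranges of $i_1$, coincide with the displayed set $\Gamma$. Theorem~\ref{teorema principal check positions} then certifies $\Gamma$ as a set of check positions for $\C$, hence $T^{-1}(\Gamma)$ for $R^*(m-3,m)$; finally Remark~\ref{check positions union 0} together with $R(m-3,m)=R(2,m)^\bot$ (Proposition~\ref{basicsRM}) delivers the two information-set assertions.

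The step I expect to be the main obstacle is the second paragraph: verifying that all four parity cases for $(a,b)$ genuinely collapse to $M(e)=b^2$, that is, that the halved cardinalities $a/2$ and $m/2$ never spoil the identity. This is exactly where the technical content of Propositions~\ref{proposicionOdeeEVEN}--\ref{proposicionOdeeODD} is consumed, and where one must check that the shrinking of $\Or(e)$ and the appearance of a smaller $n$-coset compensate each other precisely.
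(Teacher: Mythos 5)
Your proposal is correct and follows essentially the same route as the paper's proof: both evaluate $M(1)=\frac{b(b+1)}{2}$ and $M(e)=b^2$ for $e\in\U\setminus\{1\}$ via (\ref{M(u)}) together with Propositions~\ref{proposicionOde1}, \ref{proposicionOdeeEVEN} and~\ref{proposicionOdeeODD}, deduce $g_1=\frac{a(a-1)}{2}$, $g_2=\frac{a(a+1)}{2}$ from Lemma~\ref{cardinaldeU}, and conclude via Theorem~\ref{teorema principal check positions}, Remark~\ref{check positions union 0} and $R(m-3,m)=R(2,m)^\bot$; the only difference is presentational, since the paper writes out the four parity cases of $(a,b)$ separately while you consolidate them by noting that the halved cardinalities $a/2$ and $m/2$ cancel. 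Your explicit check that $b\geq 2$ (so that $b^2>\frac{b(b+1)}{2}$ and the two values of $M$ are genuinely distinct) is a small point the paper leaves implicit, and is a welcome addition.
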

\begin{IEEEproof}
 Let $\C\subseteq \A(r_1,r_2)$ be the abelian code with defining set $D(\C)=T(\D^*)$. In order to get the set $\Gamma(\C)$ we are going to compute the sequences (\ref{sequence f[i]}) and (\ref{sequence gi}) by using the expressions (\ref{M(u)}). To do that we have to study different cases depending on the parity of $a$ and $b$ respectively. As we will see, in any case we will obtain the same set of check positions.
 \begin{itemize}
  \item Case $m$ odd ($a, b$ odd). 
  
  Take $1\in\U$. By applying Proposition~\ref{proposicionOde1} we have that
  $$\displaystyle M(1)=\frac{1}{\left|C_{r_1}(1)\right|}\sum\limits_{v\in\Or(1)}\left|C_n(v)\right|=\frac{1}{a}\sum\limits_{v\in\Or(1)}\left|C_n(v)\right|=\frac{1}{a}\cdot m\cdot \left(\left\lfloor \frac{b}{2}\right\rfloor+1\right)=b\cdot \frac{b+1}{2}=\frac{b(b+1)}{2}.$$
  For any $e\in\U\setminus\{1\}$ we use Proposition~\ref{proposicionOdeeODD} \textit{1)} and we obtain
  $$M(e)=\frac{1}{\left|C_{r_1}(e)\right|}\sum\limits_{v\in\Or(e)}\left|C_n(v)\right|=\frac{1}{a}\cdot m\cdot b=b^2.$$
  Therefore $f_1=b^2>f_2=\frac{b(b+1)}{2}$. By definition, the sequence $g_1<g_2$ is obtained as follows
  $$g_1=\sum\limits_{M(v)=f_1}\left|C_{r_1}(v)\right|=a\cdot \left|\U\setminus\{1\}\right|=a\cdot \left\lfloor \frac{a}{2}\right\rfloor=a\cdot\frac{a-1}{2},$$
   and
  $$g_2=\sum\limits_{M(v)\geq f_2}\left|C_{r_1}(v)\right|=a\cdot \left|\U\right|=a\cdot\frac{a+1}{2},$$
  where we have used Lemma~\ref{cardinaldeU}.
  
  \item Case $a$ odd and $b$ even. In this case $\varepsilon^{-1}(m/2)\in\Or(1)$ (see Proposition~\ref{proposicionOde1} \textit{4)}). 
  
  Take $1\in\U$. By Proposition~\ref{proposicionOde1}
    $$\begin{array}{rcl}     M(1)&=&\displaystyle\frac{1}{\left|C_{r_1}(1)\right|}\sum\limits_{v\in\Or(1)}\left|C_n(v)\right|=\frac{1}{a}\left(\left|C_n(\varepsilon^{-1}(m/2))\right|+\sum\limits_{v\in\Or(1)\setminus\{ \varepsilon^{-1}(m/2)\}}\left|C_n(v)\right|\right)=\\
    &=&\displaystyle\frac{1}{a}\left(\frac{m}{2}+\frac{b}{2}\cdot m\right)=\frac{b}{2}+\frac{b^2}{2}=\frac{b(b+1)}{2}.
    \end{array}$$
    
  On the other hand, for any $e\in\U\setminus\{1\}$ one has that 
  $$M(e)=\frac{1}{\left|C_{r_1}(e)\right|}\sum\limits_{v\in\Or(e)}\left|C_n(v)\right|=\frac{1}{a}\cdot b\cdot m=b^2.$$
  Therefore, $f_1=b^2>f_2=\frac{b(b+1)}{2}$. The sequence $g_1<g_2$ is 
  $$g_1=\sum\limits_{M(v)=f_1}\left|C_{r_1}(v)\right|=\sum\limits_{u\in\U\setminus\{1\}}\left|C_{r_1}(u)\right|= \frac{a-1}{2}\cdot a=\frac{a(a-1)}{2},$$
$$g_2=g_1+\left|C_{r_1}(1)\right|=\frac{a(a-1)}{2}+a=\frac{a(a+1)}{2},$$

\item Case $a$ and $b$ even. 

Take $1\in\U$. Then $M(1)$ is computed exactly as in the previous case, so $M(1)=\frac{b(b+1)}{2}$.

Now, consider $\varepsilon^{-1}(a/2)\in\U$. Then, by Proposition~\ref{proposicionOdeeEVEN} \textit{3)}
$$M(\varepsilon^{-1}(a/2))=\frac{1}{\left|C_{r_1}(\varepsilon^{-1}(a/2))\right|}\sum\limits_{v\in\Or(\varepsilon^{-1}(a/2))}\left|C_n(v)\right|=\frac{2}{a}\cdot \frac{b}{2}\cdot m=b^2.$$

Finally, for any $e\in\U\setminus\{1,\varepsilon^{-1}(a/2)\}$ one has
  $$M(e)=\frac{1}{\left|C_{r_1}(e)\right|}\sum\limits_{v\in\Or(e)}\left|C_n(v)\right|=\frac{1}{a}\cdot b\cdot m=b^2.$$
  
  We conclude $f_1=b^2>f_2=\frac{b(b+1)}{2}$. The values $g_1, g_2$ are
    $$g_1=\sum\limits_{M(v)=f_1}\left|C_{r_1}(v)\right|=a\cdot \left(\left|\U\right|-2\right)+\left|C_{r_1}(\varepsilon^{-1}(a/2))\right|=a\cdot \left(\frac{a}{2}-1\right)+\frac{a}{2}=\frac{a(a-1)}{2},$$
   and
 $$g_2=g_1+\left|C_{r_1}(1)\right|=\frac{a(a-1)}{2}+a=\frac{a(a+1)}{2},$$

\item Case $a$ even and $b$ odd. In this case $\varepsilon^{-1}(m/2)\notin\Or(1)$ and $\varepsilon^{-1}(m/2)\in\U$ (see Remark~\ref{elegirU}). Observe that $\varepsilon^{-1}(a/2)\in\Or(\varepsilon^{-1}(m/2))$ and so $\varepsilon^{-1}(a/2)\notin\U$.

For $1\in\U$ we have
  $$\displaystyle M(1)=\frac{1}{\left|C_{r_1}(1)\right|}\sum\limits_{v\in\Or(1)}\left|C_n(v)\right|=\frac{1}{a}\cdot\left|\Or(1)\right|\cdot m=\frac{1}{a}\cdot\frac{b+1}{2}\cdot m=\frac{b(b+1)}{2}.$$
  
Now consider $\varepsilon^{-1}(m/2)\in\U$. Then, by Proposition~\ref{proposicionOdeeODD} \textit{2) a)}  
$$M(\varepsilon^{-1}(m/2))=\frac{1}{\left|C_{r_1}(\varepsilon^{-1}(m/2))\right|}\sum\limits_{v\in\Or(\varepsilon^{-1}(m/2))}\left|C_n(v)\right|=\frac{2}{a}\cdot\left[\left(\frac{b+1}{2}-1\right)\cdot m+\frac{m}{2}\right]=(b-1)\cdot b+b=b^2.$$

Finally, for any $e\in\U\setminus\{1,\varepsilon^{-1}(m/2)\}$ we use Proposition~\ref{proposicionOdeeODD} \textit{2) b)} and we obtain $M(e)=\frac{1}{a}\cdot b\cdot m=b^2$. Therefore, as in all the previous cases $f_1=b^2>f_2=\frac{b(b+1)}{2}$, and once more 
 $$g_1=\left|C_{r_1}(\varepsilon^{-1}(m/2)\right|+\left(\frac{a}{2}-1\right)\cdot a=\frac{a}{2}+\frac{a-2}{2}\cdot a=\frac{a(a-1)}{2}$$
and
 $$g_2=g_1 + \left|C_{r_1}(1)\right|=\frac{a(a-1)}{2}+a=\frac{a(a+1)}{2}.$$
 
\end{itemize}

In conclusion, all the cases yield the same sequences $f_1>f_2$, $g_1<g_2$. Finally, by Theorem~\ref{teorema principal check positions}, $\Gamma(\C)$ is a set of check positions for $\C$, and then $T^{-1}\left(\Gamma\right)$ is a set of check positions for $R^*(m-3,m)$. So, $\{0,\alpha^i\mid i\in T^{-1}\left(\Gamma)\right)\}$ is a set of check positions for $R(m-3,m)$ (see Remark~\ref{check positions union 0}). Since $R(m-3,m)=R(2,m)^\bot$ we are done.

\end{IEEEproof}

\begin{examples}
 The first value for $m$ that satisfies conditions (\ref{restrictions}) is $m=4$. In this case, $n= 2^4-1=15, r_1=3, r_2=5, a=b=2 $. So, let us give an information set for $R(2,4)$. By Theorem~\ref{teoremainfosetsegundoorden} we have that $$\Gamma=\{(0,0),(1,0),(2,0),(0,1),(1,1),(2,1),(0,2),(1,2),(2,2),(0,3)\}.$$
 Then, by taking the isomorphism given by the Chinese Remainder Theorem, we have that $T^{-1}(\Gamma)=\{0,1,2,3,5,6,7,10,11,12\}$. So $\{0,\alpha^i\mid i\in T^{-1}\left(\Gamma\right)\}$ is an information set for $R(2,4)$ and $\{\alpha^i\mid i\notin T^{-1}\left(\Gamma\right)\}$ is an information set for $R(1,4)$. In the same sense than in the previous section, Table IV shows the different information sets that we can get by using all the possible isomorphisms from $\Z_{15}$ to $\Z_{3}\times \Z_{5}$; the first column includes the image of $1\in \Z_{15}$ which determines the corresponding isomorphism, while the second column gives the set of exponents $I$ such that $\{0,\alpha^i\mid i\in I\}$ is an information set for $R(2,4)$. From these information sets we can obtain the corresponding ones for $R(1,4)$; the reader may check that we get four new information sets with respect to that obtained in Table I.

\begin{table}[h]
\begin{center}
\begin{tabular}{|c|c|}
 \hline
 $\rm T(1)$&$\rm I$\\ \hline
 (1,1)&\{0,1,2,3,5,6,7,10,11,12\}\\ \hline
 (2,1)&\{0,1,2,3,5,6,7,10,11,12\}\\ \hline
 (1,2)&\{0,1,3,5,6,8,9,10,11,13\}\\ \hline
 (2,2)&\{0,1,3,5,6,8,9,10,11,13\}\\ \hline
 (1,3)&\{0,2,4,5,6,7,9,10,12,14\}\\ \hline
 (2,3)&\{0,2,4,5,6,7,9,10,12,14\}\\ \hline
 (1,4)&\{0,3,4,5,8,9,10,12,13,14\}\\ \hline
 (2,4)&\{0,3,4,5,8,9,10,12,13,14\}\\ \hline
\end{tabular}
\end{center}
\begin{center}
\caption{Information sets for R(2,4)}
\end{center}

\end{table}

 The next value for $m$ is $m=6$. In this case, $n=2^6-1=63, r_1=7, r_2=9, a=3$ and $b=2$. So 
 $$
 \begin{array}{rcl}
\Gamma&=&\{(0,0),(1,0),(2,0),(3,0),(4,0),(5,0),(0,1),(1,1),(2,1),(3,1),(4,1),(5,1),(0,2),(1,2),(2,2),(3,2),
(4,2),\\
&&(5,2),(0,3),(1,3),(2,3)\}.\\
 \end{array}$$
By taking as isomorphism that given by the Chinese Remainder Theorem, we obtain 
$$T^{-1}(\Gamma)=\{0,1,2, 9,10,11,18,19,21,28,29, 30,36,37,38,45,46,47,54,56,57\},$$ so we have that $\{0,\alpha^i\mid i\in T^{-1}(\Gamma)\}$ is an information set for $R(2,6)$ and $\{\alpha^i\mid i\notin T^{-1}(\Gamma)\}$ is an information set for $R(3,6)$.  
 
 Finally, we see the case $m=8$. In this case, there are two decompositions of $n=2^8-1=255$ that satsify conditions (\ref{restrictions}), namely, $(r_1=3, r_2=85)$ and $(r_1=15, r_2=17)$. Table V shows the sets $T^{-1}(\Gamma)$ obtained for each decomposition; in both cases we are considering the isomorphism given by the Chinese Remainder Theorem.

\begin{table}[h]
\begin{center}
\begin{tabular}{|c|c|c|c|}
 \hline
 $r_1$&$r_2$&a& $T^{-1}(\Gamma)$\\ \hline
 3&85&2&\{ 0, 1, 2, 3, 4, 5, 6, 7, 8, 9, 12, 15, 85, 86, 87, 88, 89, 90, 91, 92, 93,\\
 &&&94, 96, 99, 170, 171, 172, 173, 174, 175, 176, 177, 178, 179, 180, 183 \}\\ \hline
 15&17&4&\{  0, 1, 2, 3, 17, 18, 19, 20, 34, 35, 36, 51, 52, 53, 68, 69, 105, 120, 121,
 122, 136,\\
 &&& 137, 138, 139, 153, 154, 155, 170, 171, 172, 187, 188, 189, 204,
 240, 241\}\\ \hline
\end{tabular}
\end{center}
\begin{center}
\caption{Information sets for R(5,8)}
\end{center}

\end{table}

\end{examples}

 To finish our explanation, we include the following table which shows the suitable values of $m$ up to length 4096. The values $m=2,3,5,7$ yield a prime number for $n=2^m-1$, while the value $m=11$ only admits a decomposition $(r_1=23, r_2=89)$ that does not satisfy the conditions (\ref{restrictions}).

\begin{table}[h]
\begin{center}
\begin{tabular}{|c|c|c|c|c|c|}
 \hline
 $\mathbf{m}$&$\mathbf{n}$&$\mathbf{r_1}$&$\mathbf{r_2}$& $\mathbf{a}$ & $\mathbf{b}$\\ \hline
 4&15&3&5&2&2\\ \hline
 6&63&7&9&3&2\\ \hline
 8&255&3&85&2&4\\ \hline
 8&255&15&17&4&2\\ \hline
 9&511&7&73&3&3\\ \hline
 10&1023&3&341&2&5\\ \hline
 12&4095&7&585&3&4\\ \hline
 12&4095&63&65&6&2\\ \hline
\end{tabular}
\end{center}
\begin{center}
\caption{Parameters for second order RM codes up to length 4096}
\end{center}

\end{table}

\section{Conclusions}

 We have described information sets for Reed-Muller codes of first and second-order respectively. We have seen them as group codes and we have constructed those information sets from their defining sets; that is, from an intrinsical characteristic of this family of algebraic codes. This supposes a relevant difference with regard to other approaches to the same problem, such as those proposed from a geometrical point of view (\cite{KMM,KMM2,Senev}). Moreover, the definition of the information sets turn be very simple in the end and it is given only in terms of their basic parameters.  

It is also pertinent to wonder about the benefits that our results may imply in relation to the applicability of certain decoding algorithms such as the permutation decoding algorithm, which is especially appropriate for abelian codes. In \cite{BS2} we showed a generic study of this algorithm for abelian codes starting from the information sets introduced in \cite{BS}. Therefore, since the present work manage to adapt the results in \cite{BS} to Reed-Muller codes, it is reasonable to believe that the results in \cite{BS2}can also be apply to them. Furthermore, the vision of Reed-Muller codes as affine-invariant codes allows us to search for a PD-set within the group of all affine transformations, beyond the translations. For all these reasons, the application of the permutation decoding algorithm to Reed-Muller codes, starting from the results obtained in this work instead of the geometric point of view, supposes an interesting open problem.

%


\begin{thebibliography}{99}

\bibitem{Hb} W. C. Huffman, ``Codes and Groups'', in \textsl{Handbook of Coding Theory}, vol II, V. S. Pless, W. C. Huffman and R. A. Brualdi Eds. Amsterdam. North-Holland, 1998.

\bibitem{Hb2} E. F. Assmus Jr and J. D. Key, ``Polynomial codes and Finite Geometries'', in \textsl{Handbook of Coding Theory}, vol II, V. S. Pless, W. C. Huffman and R. A. Brualdi Eds. Amsterdam. North-Holland, 1998.

\bibitem{berman} S. D. Berman, ``Semisimple cyclic and Abelian codes'', \textsl{Cybernetics}, vol. 3, no. 3, pp. 21-30, 1967.

\bibitem{BS} J. J. Bernal and J. J. Sim\'{o}n, ``Information sets from defining sets in abelian codes'', \textsl{IEEE Trans. Inform. Theory}, vol. 57, no. 12, pp. 7990-7999, 2011.

\bibitem{BS2} J. J. Bernal and J. J. Sim\'{o}n, ``Partial permutation decoding for abelian codes'', \textsl{IEEE Trans. Inform. Theory}, vol. 59, no. 8, pp. 5152-5170, 2013.

\bibitem{BSEst} J. J. Bernal and J. J. Sim\'{o}n, Reed-Muller codes: Information sets from defining sets, Coding Theory and Applications, Proceedings of 5th International Castle Meeting ICMTCA 2017, A. Barbero, V. Skachek and O. Ytrehus eds, LNCS vol. 10495, Springer, 2017, pp. 30-47. 

\bibitem{Th} J. J. Bernal, C\'{o}digos de grupo. Conjuntos de informaci\'{o}n. Decodificaci\'{o}n por permutaci\'{o}n. Ph. D. Thesis, 2011.

\bibitem{BlokMoor} A. Blokhuis, G. E. Moorhouse, ``Some p-ranks related to orthogonal spaces'', \textsl{J. Algebraic Combin}. vol. 4, pp. 295-316, 1995.

\bibitem{Charpin} P. Charpin, Codes cycliques etendus invariants sous le group affine. These de Doctorat d'Etat, Universite Paris VII, 1987.

\bibitem{HP} W. C. Huffman and V. Pless, Fundamentals of Error-Correcting Codes, Cambridge, 2003.

\bibitem{Imai} H. Imai, ``A theory of two-dimensional cyclic codes'', \textsl{Inform. and Control}, vol34, pp. 1.21, 1977.

\bibitem{KMM} J. D. Key, T. P. McDonough, V. C. Mavron, ``Partial permutation decoding for codes from finite planes'', \textsl{Eur. Journal of Combin.} vol 26, pp. 665-682, 2005.

\bibitem{KMM2}J. D. Key, T. P. McDonough, V. C. Mavron, ``Information sets and partial permutation decoding for codes from finite geometries'', \textsl{Finite Fields Appl.} vol 12, pp. 232-247, 2006.

\bibitem{MacSlo} F.J. MacWilliams, N. J. A. Sloane, The theory of error-correcting codes, North-Holland, Amsterdam, 1983.
 
\bibitem{Muller} D. E. Muller, ``Application of boolean algebra to switching circuit design and to error detection'', \textsl{IEEE Trans. Comput.} vol. 3, pp. 6-12, 1954.

\bibitem{Moor} G. E. Moorhouse, ``Bruck nets, codes and characters of loops'', \textsl{Des. Codes Cryptogr.} vol. 1, pp. 7-29, 1991.

\bibitem{Reed} I. S. Reed, ``A class of multiple-error-correcting codes and the decoding scheme'', \textsl{\textsl{IRE Trans. Inform. Theory,}} IT-4, pp 38-49, 1954.

\bibitem{Senev} P. Seneviratne, ``Permutation decoding for the first-order Reed-Muller codes'', \textsl{Discrete Math.} vol.309, pp. 1967-1970, 2009



\end{thebibliography}
\end{document}